\documentclass{fundam}


\setcounter{page}{1}
\publyear{25}
\papernumber{10}
\volume{195}
\issue{1-4}
\theDOI{10.46298/fi.12365}

\versionForARXIV


\usepackage{url} %
\usepackage{graphicx}%

\usepackage{listings}
\lstset{
    breaklines=true,
}
\makeatletter
\def\lst@lettertrue{\let\lst@ifletter\iffalse}
\makeatother

\usepackage{amsmath,amsfonts}
\usepackage{mathtools}

\usepackage{graphviz}
\usepackage{svg}

\usepackage{xparse}
\usepackage{ifthen}
\usepackage{hyperref}

\usepackage{cleveref}

\usepackage{algorithm, algpseudocode}

\usepackage{xcolor}
\definecolor[named]{ACMBlue}{cmyk}{1,0.1,0,0.1}
\definecolor[named]{ACMYellow}{cmyk}{0,0.16,1,0}
\definecolor[named]{ACMOrange}{cmyk}{0,0.42,1,0.01}
\definecolor[named]{ACMRed}{cmyk}{0,0.90,0.86,0}
\definecolor[named]{ACMLightBlue}{cmyk}{0.49,0.01,0,0}
\definecolor[named]{ACMGreen}{cmyk}{0.20,0,1,0.19}
\definecolor[named]{ACMPurple}{cmyk}{0.55,1,0,0.15}
\definecolor[named]{ACMDarkBlue}{cmyk}{1,0.58,0,0.21}
\hypersetup{colorlinks,
    linkcolor=ACMPurple,
    citecolor=ACMPurple,
    urlcolor=ACMDarkBlue,
    filecolor=ACMDarkBlue}
\usepackage{hyperref}

\usepackage{rotating}

\usepackage{diagbox}

\MakeRobust{\ref}%

\makeatletter
\newcommand{\labeltext}[2]{%
    \@bsphack
    \csname phantomsection\endcsname %
    \def\@currentlabel{#1}{\label{#2}}%
    \@esphack
}
\makeatother

\usepackage[xcolor, notion]{knowledge}
\knowledgeconfigureenvironment{theorem,lemma,proof}{}

\usepackage{tikz}
\usetikzlibrary{calc, trees, bending, backgrounds, shapes, shapes.geometric, fit, arrows, arrows.meta, positioning, graphs, quotes}

\usepackage{ebproof}

\usepackage{stackengine}

\makeatletter
\let\orgdescriptionlabel\descriptionlabel
\renewcommand*{\descriptionlabel}[1]{%
    \let\orglabel\label
    \let\label\@gobble
    \phantomsection
    \edef\@currentlabel{#1}%
    \let\label\orglabel
    \orgdescriptionlabel{#1}%
}
\makeatother

\usepackage{centernot}

\usepackage[notion]{knowledge}
\newcommand{\set}[1]{\{ #1 \}}
\newcommand{\tuple}[1]{\langle #1 \rangle}
\newcommand{\const}[1]{\mathsf{#1}}
\newcommand{\bl}{\_}
\newcommand{\defeq}{\mathrel{\ensurestackMath{\stackon[1pt]{=}{\scriptscriptstyle\Delta}}}}
\newcommand{\defiff}{\mathrel{\ensurestackMath{\stackon[1pt]{\iff}{\scriptscriptstyle\Delta}}}}

\newcommand{\nat}{\mathbb{N}}

\newcommand{\domain}[1]{|#1|}

\usepackage{amssymb}

\usepackage{stmaryrd}
\newcommand{\jump}[1]{\llbracket#1\rrbracket}

\NewDocumentCommand\model{O{1}}{%
    \ifcase#1
        undefined
    \or \mathfrak{A}
    \or \mathfrak{B}
    \else undefined

    \fi
}

\NewDocumentCommand\fml{O{1}}{%
    \ifcase#1
        undefined
    \or \varphi
    \or \psi
    \or \rho
    \else undefined

    \fi
}

\NewDocumentCommand\term{O{1}}{%
    \ifcase#1
        undefined
    \or t
    \or s
    \or u
    \else undefined

    \fi
}

\tikzstyle{mynode} = [inner sep = 1.5pt, fill= gray!20]
\tikzstyle{mysmallnode} = [inner sep = 1.pt, fill= gray!20]
\tikzset{earrow/.style={>={{[flex] Latex[length=.1cm, width=2.5pt]}}}}

\knowledge{ignore}
| validity@pl

\knowledge{notion}
| semantic equivalence w.r.t.\  binary relations
| semantically equivalent w.r.t.\  binary relations

\knowledge{notion}
| semantic equivalence
| semantic equivalences
| semantically equivalent

\knowledge{notion}
| valid
| validity

\knowledge{notion}
| finitely valid
| finite validity

\knowledge{notion}
| conservative reduction

\knowledge{notion}
| structure
| structures

\knowledge{notion}
| finite@structure
| finiteness@structure

\knowledge{notion}
| isomorphic@structure
| isomorphism@structure

\knowledge{notion}
| isomorphism closure@structure

\knowledge{notion}
| first-order predicate logic with equality

\knowledge{notion}
| variables
| variable
| variables@fo
| variable@fo

\knowledge{notion}
| formulas
| formula
| formulas@fo
| formula@fo

\knowledge{notion}
| sentences
| sentence
| sentences@fo
| sentence@fo

\knowledge{notion}
| size@fo

\knowledge{notion}
| semantics@fo

\knowledge{notion}
| true@fo

\knowledge{notion}
| satisfiable@fo
| satisfiability@fo

\knowledge{notion}
| finitely satisfiable@fo
| finite satisfiability@fo

\knowledge{notion}
| The calculus of relations
| the calculus of relations

\knowledge{notion}
| terms
| term
| terms@cor
| term@cor

\knowledge{notion}
| size@cor

\knowledge{notion}
| semantics@cor

\knowledge{notion}
| semantic relation@cor formula

\knowledge{notion}
| true@cor

\knowledge{notion}
| quantifier-free formulas
| quantifier-free formula
| quantifier-free formulas@cor
| quantifier-free formula@cor
| formula@cor
| formulas@cor

\knowledge{notion}
| semantics@cor formula

\knowledge{notion}
| size@cor formula

\knowledge{notion}
| equation
| equations
| equation@cor
| equations@cor

\knowledge{notion}
| inequation
| inequations
| inequation@cor
| inequations@cor

\knowledge{notion}
| (term) variable@cor
| (term) variables@cor
| variable@cor
| variables@cor

\knowledge{notion}
| $k$-tuple structure
| $k$-tuple structures

\knowledge{notion}
| dot-dagger alternation hierarchy

\knowledge{notion}
| Schr{\"o}der-Tarski translation

\knowledge{notion}
| standard translation
 
\begin{document}

\title{Note on a Translation from First-Order Logic into the Calculus of Relations Preserving Validity and Finite Validity}

\address{nakamura.yoshiki.ny{@}gmail.com}

\author{Yoshiki Nakamura\thanks{We would like to thank the anonymous reviewers for their useful comments.
This work was supported by JSPS KAKENHI Grant Number JP21K13828.}\\%
  Department of Computer Science\\
  Institute of Science Tokyo\\
  nakamura.yoshiki.ny{@}gmail.com
}

\maketitle

\runninghead{Y. Nakamura}{Note on a translation from FO into CoR preserving vailidity and finite vailidity}

\begin{abstract}
In this note, we give a linear-size translation from formulas of first-order logic into equations of the calculus of relations preserving validity and finite validity.
Our translation also gives a linear-size conservative reduction from formulas of first-order logic into formulas of the three-variable fragment of first-order logic.
 \end{abstract}

\begin{keywords}
  first-order logic, relation algebra
\end{keywords}

\section{Introduction}
\emph{\kl{The calculus of relations}} (CoR, for short) \cite{tarskiCalculusRelations1941} is an algebraic system with operations on binary relations.
As binary relations appear everywhere in computer science, CoR and relation algebras can be applied to various areas, such as databases and program development and verification \cite{givantCalculusRelations2017}.
W.r.t.\ binary relations, CoR has the same expressive power as the three-variable fragment of \kl{first-order predicate logic with equality} ($\mathrm{FO3}_{=}$) (where all predicate symbols are binary) \cite{tarskiFormalizationSetTheory1987}, so CoR has strictly less expressive power than \kl{first-order predicate logic with equality} ($\mathrm{FO}_{=}$).
For example, CoR \kl{equations} cannot characterize the class of structures s.t.\ ``its cardinality is greater than or equal to $4$'', whereas it can be characterized by the $\mathrm{FO}_{=}$ \kl{formula} $\forall x_1, \forall x_2, \forall x_3, \exists y, (\lnot {y = x_1}) \land (\lnot {y = x_2}) \land (\lnot {y = x_3})$ where $x_1, x_2, x_3, y$ are pairwise distinct variables.

Nevertheless, there is a recursive translation (total recursive function) from $\mathrm{FO}_{=}$ \kl{formulas} into CoR \kl{equations} (resp.\ $\mathrm{FO3}_{=}$ \kl{formulas}) preserving \kl{validity} \cite{tarskiFormalizationSetTheory1987} (see also \cite{madduxFinitaryAlgebraicLogic1989,andrekaReducingFirstorderLogic2013}).

In this paper, we give another recursive translation from $\mathrm{FO}_{=}$ \kl{formulas} into CoR \kl{equations} preserving \kl{validity}, slightly refined in that it satisfies both of the following:
\begin{enumerate}
  \item Our translation preserves both \kl{validity} and \kl{finite validity} (so, it also gives a \intro*\kl{conservative reduction} \cite[Def.\ 2.1.35]{borgerClassicalDecisionProblem1997}).
  \item Our translation is linear-size (i.e., the output size is bounded by a linear function in the input size).
\end{enumerate}
The first refinement is useful, e.g., in finding counter-models (because if there exists a finite counter-model in the pre-translated \kl(fo){formula}, then there also exists a finite counter-model in the post-translated \kl(cor){formula}).
Such a translation is already known (e.g., \cite[Cor.\ 3.1.8 and Thm.\  3.1.9]{borgerClassicalDecisionProblem1997}), but via encodings of Turing-machines and domino problems.
Our translation presents a \kl{conservative reduction} from $\mathrm{FO}_{=}$ \kl{formulas} to $\mathrm{FO3}_{=}$ \kl{formulas}, directly.
Thanks to this, we also have the second refinement, which shows that the \kl{validity} (resp.\ \kl{finite validity}) problem of $\mathrm{FO}_{=}$ \kl{formulas} and that of CoR \kl{equations} are equivalent under linear-size translations, as the converse direction immediately follows from the \kl{standard translation} from CoR \kl{equations} into $\mathrm{FO3}_{=}$ \cite{tarskiCalculusRelations1941} (Prop.\ \ref{proposition: standard translation}).

Our translation is not so far from known encodings (e.g., \cite{madduxFinitaryAlgebraicLogic1989, andrekaReducingFirstorderLogic2013}) in that they and our translation use pairing ($2$-tupling) functions, but in our translation, we use \emph{non-nested} $k$-tupling functions where $k$ is an arbitrary natural number, instead of arbitrarily nested pairing functions.
For constructions using arbitrarily nested pairing functions, we need infinitely many vertices even if the base universe is finite (as there is no surjective function from $X$ to $X^2$ when $\# X$ is finite and $\# X \ge 2$).
Thanks to the modification above, our construction preserves both \kl{validity} and \kl{finite validity}.
Additionally, to preserve the output size linear in the input size, we apply a cumulative sum technique.

This paper is structured as follows.
In \Cref{section: preliminaries}, we give basic definitions of $\mathrm{FO}_{=}$ and CoR.
In \Cref{section: reduction}, we give a translation from $\mathrm{FO}_{=}$ \kl{formulas} into CoR \kl{equations} preserving \kl{validity} and \kl{finite validity}.
In \Cref{section: Tseitin}, we additionally give a Tseitin translation for CoR, which is useful for reducing the number of alternations of operations.
Additionally, in \Cref{section: direct proof}, we give a direct translation from $\mathrm{FO}_{=}$ \kl{formulas} into $\mathrm{FO3}_{=}$ \kl{formulas}, not via CoR, for explicitly writing a transformed $\mathrm{FO3}_{=}$ \kl{formulas}; the translation is the same as that given in \Cref{section: reduction}.

\newcommand{\sig}{\Sigma}
\newcommand{\V}{\mathbf{V}}
\section{Preliminaries}\label{section: preliminaries}
We write $\nat$ for the set of all non-negative integers.
For a set $A$, we write $\# A$ for the cardinality of $A$.

A \intro*\kl{structure} $\model$ over a set $A$ is a tuple $\tuple{\domain{\model}, \set{a^{\model}}_{a \in A}}$, where
\begin{itemize}
  \item the universe $\domain{\model}$ is a non-empty set of vertices,
  \item each $a^{\model} \subseteq \domain{\model}^{2}$ is a binary relation on $\domain{\model}$.
\end{itemize}
We say that a \kl{structure} $\model$ is \intro*\kl(structure){finite} if $\domain{\model}$ is finite.
For \kl{structures} $\model[1], \model[2]$ over a set $A$,
we say that $\model[1]$ and $\model[2]$ are \intro*\kl(structure){isomorphic} if there is a bijective map $f \colon \domain{\model[1]} \to \domain{\model[2]}$ such that for all $x, y \in \domain{\model}$ and $a \in A$, we have $\tuple{x, y} \in a^{\model[1]} \Longleftrightarrow \tuple{f(x), f(y)} \in a^{\model[2]}$.

\subsection{First-order logic}\label{section: first-order logic}
Let $\sig$ be a countably infinite set of binary predicate symbols and $\V$ be a countably infinite set of \intro*\kl(fo){variables}.
The set of \intro*\kl(fo){formulas} in \intro*\kl{first-order predicate logic with equality} ($\mathrm{FO}_{=}$) is defined by:
\begin{align*}
  \fml[1], \fml[2], \fml[3] & \quad::=\quad a(x, y) \mid {x = y} \mid \lnot \fml[2] \mid \fml[2] \land \fml[3] \mid \exists x, \fml[2] \tag{$a \in \sig$ and $x, y \in \V$}
\end{align*}
We write $\mathrm{V}(\fml)$ for the set of free and bound \kl{variables} occurring in a \kl{formula} $\fml$.
For $k \ge 0$, we write $\mathrm{FO}k_{=}$ for the set of all \kl{formulas} $\fml$ s.t.\ $\# \mathrm{V}(\fml) \le k$.
(In this paper, $\mathrm{FO}3_{=}$ mostly occurs.)
A \intro*\kl(fo){sentence} is a \kl(fo){formula} not having any free \kl{variable}.
We use parentheses in ambiguous situations and use the following notations:
\begin{align*}
  \fml[1] \lor \fml[2] & \;\defeq\; \lnot ((\lnot \fml[1]) \land (\lnot \fml[2])) & 
  \fml[1] \to \fml[2] & \;\defeq\; (\lnot \fml[1]) \lor \fml[2] &
  \const{t}            & \;\defeq\; \exists x, {x = x}                            \\
   \forall x, \fml[2] & \;\defeq\; \lnot \exists x, \lnot \fml[2] &
  \fml[1] \leftrightarrow \fml[2] & \;\defeq\; (\fml[1] \to \fml[2]) \land (\fml[2] \to \fml[1]) &
   \const{f}          & \;\defeq\; \lnot \const{t}
\end{align*}
We write $\bigwedge \Gamma$ for the \kl(fo){formula} $\fml_1 \land \dots \land \fml_n$
where $\Gamma = \set{\fml_1, \dots, \fml_n}$ is a finite set (and $\fml_1, \dots, \fml_n$ are ordered by a total order).
The \intro*\kl(fo){size} $\|\fml\| \in \nat$ of a \kl(fo){formula} $\fml$ is defined by:
\begin{align*}
  \|a(x, y)\|               & \defeq 1 + 2                         &
  \|x = y\|                 & \defeq 1 + 2                         &
  \|\lnot \fml[2]\|         & \defeq 1 + \|\fml[2]\|                 \\
  \|\fml[2] \land \fml[3]\| & \defeq 1 + \|\fml[2]\| + \|\fml[3]\| &
  \|\exists x, \fml[2]\|    & \defeq 1 + 1 + \|\fml[2]\|
\end{align*}

For a \kl{structure} $\model$ over $\sig$,
the \intro*\kl(fo){semantics} $\jump{\fml}^{\model} \subseteq \domain{\model}^{\V}$ of a \kl(fo){formula} $\fml$ over $\model$ is defined as follows (where $\domain{\model}^{\V}$ denotes the set of functions from $\V$ to $\domain{\model}$):
\begin{align*}
  \jump{a(x, y)}^{\model}               & \defeq \set{f \colon \V \to \domain{\model} \mid \tuple{f(x), f(y)} \in a^{\model}}                                              \\
  \jump{x = y}^{\model}                 & \defeq \set{f \colon \V \to \domain{\model} \mid f(x) = f(y)}                                                                    \\
  \jump{\lnot \fml[2]}^{\model}         & \defeq \domain{\model}^{\V} \setminus \jump{\fml[2]}^{\model}                                                                    \\
  \jump{\fml[2] \land \fml[3]}^{\model} & \defeq \jump{\fml[2]}^{\model} \cap \jump{\fml[3]}^{\model}                                                                      \\
  \jump{\exists x, \fml[2]}^{\model}    & \defeq \set{f \colon \V \to \domain{\model} \mid \mbox{ for some $v \in \domain{\model}$, $f[v/x] \in \jump{\fml[2]}^{\model}$}}
\end{align*}
Here, $f[v/x]$ denotes the function $f$ in which the value of $x$ has been replaced with $v$.

For a \kl(fo){formula} $\fml$ and a \kl{structure} $\model$,
we say that $\fml$ is \intro*\kl(fo){true} on $\model$, written $\model \models \fml$, if $\jump{\fml}^{\model} = \domain{\model}^{\V}$.
We say that a \kl(fo){formula} $\fml$ is \intro*\kl{valid} (resp.\ \intro*\kl{finitely valid}) if
$\jump{\fml}^{\model} = \domain{\model}^{\V}$ holds for all \kl{structures} (resp.\ all \kl(structure){finite} \kl{structures}) $\model$.
We say that two \kl(fo){formulas} $\fml[1], \fml[2]$ are \intro*\kl{semantically equivalent} if the \kl(fo){formula} $\fml[1] \leftrightarrow \fml[2]$ is \kl{valid}.
Additionally, we say that a \kl(fo){formula} $\fml$ is \intro*\kl(fo){satisfiable} (resp.\ \intro*\kl(fo){finitely satisfiable}) if
$\jump{\fml}^{\model} \neq \emptyset$ holds for some \kl{structure} (resp.\ \kl(structure){finite} \kl{structure}) $\model$.

\begin{remark}\label{remark: first-order logic definition}
  Function and constant symbols can be encoded by predicate symbols with functionality axiom (see, e.g., \cite[Sect.\ 19.4]{boolosComputabilityLogic2007})
  and each predicate symbol (of arbitrary arity) can be encoded by binary predicate symbols (see, e.g., \cite[Lem.\ 21.2 (p.\ 275)]{boolosComputabilityLogic2007}, which translations each atomic formula $a(x_1, \dots, x_k)$ into the formula $\exists z, (\bigwedge_{1 \le j \le k} p_j(z, x_j)) \land a'(z, z)$ where $z$ is a fresh variable,
$p_1, \dots, p_k$ are fresh binary symbols for expressing projections, and
$a'$ is a fresh binary symbol for expressing the relation $a$, respectively.
  For instance, the formula $a(x, y, x) \land a(y, y, x)$ is translated into $(\exists z, p_1(z, x) \land p_2(z, y) \land p_3(z, x) \land a'(z, z)) \land (\exists z, p_1(z, y) \land p_2(z, y) \land p_3(z, x) \land a'(z, z))$).
  Thus, by well-known facts, we can give a linear-size translation from formulas of first-order logic with predicate and function symbols of arbitrary arity into $\mathrm{FO}_{=}$ \kl{formulas} (above) preserving \kl{validity} and \kl{finite validity}.
  Here, the \kl(fo){size} of a $k$-ary atomic formula is defined as $\| a(x_1, \dots, x_k) \| = 1 + k$.
  (Note that the translation is not linear-size when the size is defined as $\|a(x_1, \dots, x_k)\| = 1$ and $k$ is not bounded.
  Hence, this linearity depends on the size definition.)
  Hence, we consider the $\mathrm{FO}_{=}$ above (equality $=$ can also be eliminated, see, e.g., \cite[Sect.\ 19.4]{boolosComputabilityLogic2007}, but we introduce it only for convenience).
\end{remark}
\subsection{The calculus of relations}
Let $\sig$ be a countably infinite set of \intro*\kl(cor){(term) variables}.
The set of \intro*\kl(cor){terms} in \intro*\kl{the calculus of relations} (CoR) is defined by:
\begin{align*}
  \term[1], \term[2], \term[3] & \quad::=\quad a \mid \const{I} \mid \term[2]^{-} \mid \term[2] \cap \term[3] \mid \term[2] \cdot \term[3]\mid \term[2]^{\smile} \tag{$a \in \sig$}
\end{align*}
We write $\bigcap \Gamma$ for the \kl(cor){term} $\term_1 \cap \dots \cap \term_n$
where $\Gamma = \set{\term_1, \dots, \term_n}$ is a finite set (and $\term_1, \dots, \term_n$ are ordered by a total order).
We use parentheses in ambiguous situations and use the following notations:
\begin{align*}
  \term[1] \cup \term[2] & \quad\defeq\quad (\term[1]^{-} \cap \term[2]^{-})^{-} & \term[1] \dagger \term[2] & \quad\defeq\quad (\term[1]^{-} \cdot \term[2]^{-})^{-} \\
  \top                   & \quad\defeq\quad \const{I} \cup \const{I}^{-}         & \bot                      & \quad\defeq\quad \top^{-}
\end{align*}
The \intro*\kl(cor){size} $\|\term\| \in \nat$ of a \kl(cor){term} $\term$ is defined by:
\begin{align*}
  \|a\|                       & \defeq 1                                &
  \|\const{I}\|               & \defeq 1                                &
  \|\term[2]^{-}\|            & \defeq 1 + \|\term[2]\|                   \\
  \|\term[2] \cap \term[3]\|  & \defeq 1 + \|\term[2]\| + \|\term[3]\|  &
  \|\term[2] \cdot \term[3]\| & \defeq  1 + \|\term[2]\| + \|\term[3]\| &
  \|\term[2]^{\smile}\|       & \defeq 1 + \|\term[2]\|
\end{align*}

The \intro*\kl(cor){semantics} $\jump{\term}^{\model} \subseteq \domain{\model}^{2}$ of a \kl(cor){term} $\term$ over a \kl{structure} $\model$ over $\sig$ is defined by:
\begin{align*}
  \jump{a}^{\model}                       & \defeq \set{\tuple{v, v'} \in \domain{\model}^2 \mid \tuple{v, v'} \in a^{\model}}                                                                                                                 \\
  \jump{\const{I}}^{\model}               & \defeq \set{\tuple{v, v'} \in \domain{\model}^2 \mid v = v'}                                                                                                                                       \\
  \jump{\term[2]^{-}}^{\model}            & \defeq \domain{\model}^{2} \setminus \jump{\term[2]}^{\model}                                                                                                                                      \\
  \jump{\term[2] \cap \term[3]}^{\model}  & \defeq \jump{\term[2]}^{\model} \cap \jump{\term[3]}^{\model}                                                                                                                                      \\
  \jump{\term[2] \cdot \term[3]}^{\model} & \defeq \set{\tuple{v, v'} \in \domain{\model}^2 \mid \mbox{ for some $v'' \in \domain{\model}$, $\tuple{v, v''} \in \jump{\term[2]}^{\model}$ and $\tuple{v'', v'} \in \jump{\term[3]}^{\model}$}} \\
  \jump{\term[2]^{\smile}}^{\model}       & \defeq \set{\tuple{v, v'} \in \domain{\model}^2 \mid \tuple{v', v} \in \jump{\term[2]}^{\model} }
\end{align*}
We say that a CoR \kl{term} $\term[1]$ and an $\mathrm{FO}_{=}$ \kl{formula} $\fml$ with two distinct free \kl{variables} $x_1$ and $x_2$ are \intro*\kl{semantically equivalent w.r.t.\ binary relations} if
$\jump{\term}^{\model} = \set{\tuple{f(x_1), f(x_2)} \mid f \in \jump{\fml}^{\model}}$ holds for all \kl{structures} $\model$.
It is well-known that we can translate CoR \kl{terms} into $\mathrm{FO3}_{=}$ \kl{formulas}.
\begin{proposition}[the \intro*\kl{standard translation} theorem \cite{tarskiCalculusRelations1941}]\label{proposition: standard translation}
  Let $x_1$ and $x_2$ be distinct \kl{variables}.
  There is a linear-size translation from CoR \kl(cor){terms} into $\mathrm{FO3}_{=}$ \kl(fo){formulas} with two free \kl{variables} $x_1$ and $x_2$ preserving the \kl{semantic equivalence w.r.t.\ binary relations}.
\end{proposition}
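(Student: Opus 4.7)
The plan is to define a translation $\mathrm{ST}_{x,y}(\term)$ by induction on the structure of $\term$, parameterised by an ordered pair of distinct variables $(x,y) \in \{x_1, x_2, x_3\}^2$. For the atomic cases set $\mathrm{ST}_{x,y}(a) \defeq a(x,y)$ and $\mathrm{ST}_{x,y}(\const{I}) \defeq (x = y)$. For the Boolean connectives and converse, recurse pointwise: $\mathrm{ST}_{x,y}(\term[2]^{-}) \defeq \lnot \mathrm{ST}_{x,y}(\term[2])$, $\mathrm{ST}_{x,y}(\term[2] \cap \term[3]) \defeq \mathrm{ST}_{x,y}(\term[2]) \land \mathrm{ST}_{x,y}(\term[3])$, and $\mathrm{ST}_{x,y}(\term[2]^{\smile}) \defeq \mathrm{ST}_{y,x}(\term[2])$. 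The only nontrivial clause is composition, where I let $z$ denote the unique element of $\{x_1, x_2, x_3\} \setminus \{x, y\}$ and set $\mathrm{ST}_{x,y}(\term[2] \cdot \term[3]) \defeq \exists z,\; \mathrm{ST}_{x,z}(\term[2]) \land \mathrm{ST}_{z,y}(\term[3])$. The output is then $\mathrm{ST}_{x_1, x_2}(\term)$; by construction it uses only the three variables $x_1, x_2, x_3$.

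The core correctness claim to prove by structural induction on $\term$ is the following strengthened statement: for every \kl{structure} $\model$ and every pair of distinct variables $(x, y) \subseteq \{x_1, x_2, x_3\}$,
\[ \jump{\mathrm{ST}_{x,y}(\term)}^{\model} \;=\; \{\, f \in \domain{\model}^{\V} \mid \tuple{f(x), f(y)} \in \jump{\term}^{\model}\,\}. \]
The atomic, Boolean, and converse cases are immediate from the respective definitions of \kl(fo){semantics} and \kl(cor){semantics}. For composition, I would apply the inductive hypothesis to $\term[2]$ at the pair $(x,z)$ and to $\term[3]$ at the pair $(z,y)$, and observe that the existential clause $\exists z, \ldots$ in the \kl(fo){semantics} rewrites term-by-term to the intermediate-vertex clause in the \kl(cor){semantics} of $\term[2] \cdot \term[3]$. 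The strengthened induction hypothesis (ranging over all ordered pairs of distinct variables, not just $(x_1,x_2)$) is exactly what makes this step go through, since after one recursion the arguments are no longer $(x_1, x_2)$.

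Linearity of the size is a straightforward second induction: each CoR constructor contributes a bounded number of additional symbols (at most five in the composition case, counting $\exists z$, the logical connective, and the parameter variables), so $\|\mathrm{ST}_{x,y}(\term)\| \le c \cdot \|\term\|$ for a universal constant $c$. The point to watch — and the main obstacle — is the composition case: a naive translation would introduce a fresh intermediate variable each time, blowing up both the variable count and, if one then tries to rename, potentially the size. The resolution is the reuse trick above: since the free \kl{variables} of every intermediate formula $\mathrm{ST}_{x,y}(\term)$ are always contained in $\{x, y\}$, quantifying over the third variable $z \in \{x_1, x_2, x_3\} \setminus \{x, y\}$ cannot clash with anything in the subformulas, so recycling $z$ at every composition both keeps us inside $\mathrm{FO3}_{=}$ and keeps the output size linear in $\|\term\|$.
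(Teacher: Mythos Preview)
Your proposal is correct and is exactly the standard translation the paper invokes; the paper's own proof is only a one-line sketch (``because we can express each operation in CoR by $\mathrm{FO3}_{=}$ formulas''), and you have simply written out the details of that approach, including the variable-recycling trick for composition that keeps the translation inside three variables and linear in size.
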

\begin{proofsketch}
  Because we can express each operations in CoR by using $\mathrm{FO3}_{=}$ \kl{formulas} (see also \cite[Fig. 1]{nakamuraExpressivePowerSuccinctness2020}).
\end{proofsketch}

Moreover, the set of \intro*\kl(cor){quantifier-free formulas} in CoR is inductively defined as follows:
\begin{align*}
  \fml[1], \fml[2], \fml[3] & \quad::=\quad \term[1] = \term[2] \mid \lnot \fml[2] \mid \fml[2] \land \fml[3]  \tag{$\term[1], \term[2]$ are \kl{terms} in CoR}
\end{align*}
We say that $\term[1] = \term[2]$ is an \intro*\kl(cor){equation}.
An \intro*\kl(cor){inequation} $\term[1] \le \term[2]$ is an abbreviation of the \kl(cor){equation} $\term[1] \cup \term[2] = \term[2]$.
As with \Cref{section: first-order logic}, we use the following notations:
\begin{align*}
  \fml[1] \lor \fml[2] & \;\defeq\; \lnot ((\lnot \fml[1]) \land (\lnot \fml[2])) & 
  \fml[1] \to \fml[2] & \;\defeq\; (\lnot \fml[1]) \lor \fml[2] &
  \const{t}            & \;\defeq\; \const{I} = \const{I}                            \\
  & &
  \fml[1] \leftrightarrow \fml[2] & \;\defeq\; (\fml[1] \to \fml[2]) \land (\fml[2] \to \fml[1]) &
   \const{f}          & \;\defeq\; \lnot \const{t}
\end{align*}

The \intro*\kl(cor formula){size} $\|\fml\| \in \nat$ of a \kl(cor){quantifier-free formula} $\fml$ is defined by:
\begin{align*}
  \|\term[1] = \term[2]\|   & \defeq 1 + \|\term[1]\| + \|\term[2]\| &
  \|\lnot \fml[2]\|         & \defeq 1 + \|\fml[2]\|                 &
  \|\fml[2] \land \fml[3]\| & \defeq 1 + \|\fml[2]\| + \|\fml[3]\|
\end{align*}

The \intro*\kl(cor formula){semantic relation} $\model \models \fml$, where $\fml$ is a \kl(cor){quantifier-free formula} and $\model$ is a \kl{structure} over $\sig$, is defined by:
\begin{align*}
  \model \models \term[1] = \term[2]   & \quad\defiff\quad                                                                                                                                           \jump{\term[1]}^{\model} = \jump{\term[2]}^{\model} \\
  \model \models \lnot \fml[2]         & \quad\defiff\quad \mbox{not ($\model \models \fml[2]$)}                                                                                                                                                           \\
  \model \models \fml[2] \land \fml[3] & \quad\defiff\quad (\model \models \fml[2]) \mbox{ and } (\model \models \fml[3])
\end{align*}
For a \kl(cor){quantifier-free formula} $\fml$ and a \kl{structure} $\model$,
we say that $\fml$ is \intro*\kl(cor){true} on $\model$ if $\model \models \fml$.
Similarly for $\mathrm{FO}_{=}$ \kl{formulas},
we say that a \kl(cor){quantifier-free formula} $\fml$ is \emph{\kl{valid}} (resp.\ \emph{\kl{finitely valid}}) if $\model \models \fml$ holds for all \kl{structures} (resp.\ all \kl(structure){finite} \kl{structures}) $\model$.
We say that two \kl(cor){quantifier-free formulas} $\fml[1], \fml[2]$ are \emph{\kl{semantically equivalent}} if the \kl(cor){quantifier-free formula} $\fml[1] \leftrightarrow \fml[2]$ is \kl{valid}.

It is also well-known that we can translate CoR \kl{quantifier-free formulas} into CoR \kl{equations}, preserving the \kl{semantic equivalence}.
\begin{proposition}[{\intro*\kl{Schr{\"o}der-Tarski translation} theorem \cite{tarskiCalculusRelations1941}}]\label{proposition: Schroder-Tarski}
  There is a linear-size translation from a given \kl(cor){quantifier-free formula} $\fml$ in CoR into a \kl(cor){term} $\term$ such that $\fml$ and $(\term = \top)$ are \kl{semantically equivalent}.
\end{proposition}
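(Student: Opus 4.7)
The plan is to construct, by induction on the \kl(cor){quantifier-free formula} $\fml$, a CoR \kl{term} $\term_{\fml}$ such that $\model \models \fml$ iff $\jump{\term_{\fml}}^{\model} = \jump{\top}^{\model}$ holds for every \kl{structure} $\model$, in such a way that each construction step contributes only a constant amount to the term size.

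For the atomic case $\fml = (\term[1] = \term[2])$, I would take
\[ \term_{\term[1] = \term[2]} \;\defeq\; (\term[1] \cap \term[2]) \cup (\term[1]^{-} \cap \term[2]^{-}), \]
the ``pointwise biconditional''; this term evaluates to $\top$ on $\model$ precisely when $\term[1]$ and $\term[2]$ denote the same binary relation on $\domain{\model}$. For the conjunction case, I would set $\term_{\fml[2] \land \fml[3]} \defeq \term_{\fml[2]} \cap \term_{\fml[3]}$, which evaluates to $\top$ iff both factors do.

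The interesting step is negation. Here I rely on the observation that, for any CoR \kl{term} $\term$ and any \kl{structure} $\model$, one has $\jump{\term}^{\model} \neq \jump{\top}^{\model}$ iff $\jump{\top \cdot \term^{-} \cdot \top}^{\model} = \jump{\top}^{\model}$: if $\term^{-}$ contains any pair $\tuple{v, v'}$, composing on both sides with the universal relation promotes this single witness to every pair of $\domain{\model}^{2}$ (using that $\domain{\model}$ is non-empty); conversely, if $\term^{-}$ denotes the empty relation, the triple composition is also empty. Hence I would set $\term_{\lnot \fml[2]} \defeq \top \cdot (\term_{\fml[2]})^{-} \cdot \top$.

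Correctness then follows by a straightforward induction on $\fml$ using the three identities above. For linearity, each inductive case adds only a constant overhead on top of the sizes of its inputs (the base case at worst doubles $\|\term[1]\|$ and $\|\term[2]\|$ with some constant offset, while conjunction and negation each add a constant to the sizes of their subterms), so a routine induction yields $\|\term_{\fml}\| \le C \|\fml\|$ for some absolute constant $C$. The only real obstacle is spotting the negation encoding $\top \cdot (\cdot)^{-} \cdot \top$; once this standard idiom for ``not identically $\top$'' is in hand, the rest of the argument is mechanical.
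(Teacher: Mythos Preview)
Your proposal is correct and takes essentially the same approach as the paper: the paper also normalizes each equation $\term[1] = \term[2]$ to $(\term[1] \cap \term[2]) \cup (\term[1]^{-} \cap \term[2]^{-}) = \top$, handles conjunction via $\cap$, and handles negation via the same idiom $\top \cdot (\cdot)^{-} \cdot \top$. The only cosmetic difference is that the paper phrases it as a two-pass rewrite (first normalize atoms, then eliminate connectives) rather than a single inductive definition.
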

\begin{proof}[The proof is from \cite{tarskiCalculusRelations1941}.]
  First, by using $(\term[2] = \term[3]) \leftrightarrow ((\term[2] \cap \term[3]) \cup (\term[2]^{-} \cap \term[3]^{-}) = \top)$,
  we translate a given \kl(cor){quantifier-free formula} into a \kl(cor){quantifier-free formula} s.t.\ each \kl(cor){equation} is of the form $\term[1] = \top$.
  Second, by using the following two \kl{semantic equivalences}, we eliminate logical connectives:
  \begin{align*}
    \lnot (\term[2] = \top)                   & \leftrightarrow \top \cdot \term[2]^{-} \cdot \top = \top &
    (\term[2] = \top) \land (\term[3] = \top) & \leftrightarrow \term[2] \cap \term[3] = \top
  \end{align*}
  We then have obtained the desired \kl(cor){equation} of the form $\term[1] = \top$.
\end{proof}

\begin{remark}\label{remark: semantics equivalence}
  There is also a translation from $\mathrm{FO3}_{=}$ \kl{formulas} (with two free \kl{variables}) into CoR \kl{terms} preserving the \kl{semantic equivalence w.r.t.\ binary relations} (i.e., the converse direction of Prop.~\ref{proposition: standard translation}) \cite{tarskiFormalizationSetTheory1987,givantCalculusRelations2017,nakamuraExpressivePowerSuccinctness2020,nakamuraExpressivePowerSuccinctness2022}, but the best known translation is an exponential-size translation
  and it is open whether there is a subexponential-size translation \cite{nakamuraExpressivePowerSuccinctness2020,nakamuraExpressivePowerSuccinctness2022}.
  This paper's translation given in Sect.\ \ref{section: reduction} only preserves \kl{validity} and \kl{finite validity} and does not preserve the \kl{semantic equivalence w.r.t.\  binary relations}, but it is a linear-size translation.
\end{remark}

\section{A translation from first-order logic into CoR}\label{section: reduction}
We consider the following \kl{structure} transformation.
Based on this transformation, we will give a translation from $\mathrm{FO}_{=}$ \kl{formulas} into $\mathrm{CoR}$ \kl{equations}.
\begin{definition}[\intro*\kl{$k$-tuple structure}]\label{definition: k-extended structure}
  Let $\model$ be a \kl{structure} over $\sig$.
  For $k \ge 1$, the \kl{$k$-tuple structure} of $\model$, written $\model^{(k)}$, is the \kl{structure} over $\sig^{(k)} \defeq \sig \cup \set{U} \cup \set{\pi_{i}, Q_{i}, E_{[1, i]}, E_{[i, k]} \mid 1 \le i \le k}$ defined as follows:
  \begin{align*}
    \domain{\model^{(k)}}      & = \domain{\model}^{k}                                                                                                                                                              \\
    a^{\model^{(k)}}           & = \set{\tuple{\tuple{v, \dots, v}, \tuple{w, \dots, w}} \mid \tuple{v, w} \in a^{\model}} \mbox{ for $a \in \sig$}                                                                 \\
    U^{\model^{(k)}}           & = \set{\tuple{\tuple{v, \dots, v}, \tuple{v, \dots, v}} \mid v \in \domain{\model}}                                                                                                                             \\
     \pi_{i}^{\model^{(k)}}     & = \set{\tuple{\tuple{v_1, \dots, v_i, \dots, v_k}, \tuple{v_i, \dots, v_i}} \mid v_1, \dots, v_k \in \domain{\model}}                                                                          \\
    Q_{i}^{\model^{(k)}}       & = \set{\tuple{\tuple{v_1, \dots, v_k}, \tuple{v_1', \dots, v_k'}} \in \domain{\model}^{k} \times \domain{\model}^{k} \mid v_j = v_j' \mbox{ for $1 \le j \le k$ s.t.\ $j \neq i$}}\\
    E_{[i, i']}^{\model^{(k)}} & = \set{\tuple{\tuple{v_1, \dots, v_k}, \tuple{v_1', \dots, v_k'}} \in \domain{\model}^{k} \times \domain{\model}^{k} \mid v_j = v_j' \mbox{ for $i \le j \le i'$}}                 
  \end{align*}
\end{definition}
Intuitively, in \kl{$k$-tuple structures} $\model^{(k)}$, we reflect each vertex $v$ on $\model$ to the vertex $\tuple{v, \dots, v}$ on $\model^{(k)}$.
The predicate $U$ denotes the set of such vertices on $\model^{(k)}$ (coded into a binary identity relation).
Each $k$-tuple $\tuple{v_1, \dots, v_i, \dots, v_k}$ denotes the values of $k$ \kl{variables}.
By using the predicates $\pi_i$, we can map the tuple to the tuple $\tuple{v_i, \dots, v_i}$ (so, each $\pi_i$ behaves as a projection), which is the vertex indicated by the $i$-th variable.
The predicate $Q_{i}$ relates two $k$-tuples if their $j$-th elements are equal except when $j = i$; we will use $Q_{i}$ to denote the existential quantifier ``$\exists x_i$'' where $x_i$ denotes the $i$-th variable.
The predicate $E_{[i,i']}$ relates two $k$-tuples if their $j$-th elements are equal for $i \le j \le i'$; we will use $E_{[i,i']}$ for succinctly defining $Q_{i}$.

We write $k\mbox{-TUPLE}$ for the class of all \kl{$k$-tuple structures}.
Fig.\ \ref{figure: example} gives a graphical example of \kl{$k$-tuple structures} when $k = 2$.
\begin{figure}[b]
  \centering

  \begin{tabular}{ccc}
    \begin{tikzpicture}[baseline = 0.5ex]
      \graph[grow right = 2.5cm, branch down = 2.5cm, nodes={}]{
      {0/{$0$}[draw, circle, mynode],/}-!-{/, 1/{$1$}[draw, circle, mynode]}
      };
      \graph[use existing nodes, edges={color=black, pos = .5, earrow}, edge quotes={fill=white, inner sep=1.pt,font= \scriptsize}]{
      0 ->[bend right = 10, "$a$"] 1;
      1 ->[bend right = 10, "$b$"] 0;
      };
    \end{tikzpicture} & \hspace{3em} &
    \begin{tikzpicture}[remember picture, baseline = 0.5ex]
      \graph[grow right = 2.5cm, branch down = 2.5cm, nodes={mynode}]{
      {00/{$00$}[draw, circle], 01/{$01$}[draw, circle]}-!-{10/{$10$}[draw,circle], 11/{$11$}[draw, circle]}
      };
      \graph[use existing nodes, edges={color=black, pos = .5, earrow}, edge quotes={fill=white, inner sep=1.pt,font= \scriptsize}]{
      00 ->[bend right = 10, "$a$"] 11;
      11 ->[bend right = 10, "$b$"] 00;
      };
    \end{tikzpicture}                                  \\[7ex]
                                                                                                                                   &              &
    \begin{tikzpicture}[remember picture, overlay]
      \graph[use existing nodes, edges={color=black, pos = .5, earrow}, edge quotes={fill=white, inner sep=1.pt,font= \scriptsize}]{
      00 ->["$\pi_1$", loop above] 00;
      01 ->["$\pi_1$", pos =.7] 00;
      10 ->["$\pi_1$", pos =.7] 11;
      11 ->["$\pi_1$", loop below] 11;
      00 ->["$\pi_2$", loop left] 00;
      01 ->["$\pi_2$", pos =.7] 11;
      10 ->["$\pi_2$", pos =.7] 00;
      11 ->["$\pi_2$", loop right] 11;
      };
    \end{tikzpicture}                                  \\[2ex]
    $\model$                                                                                                                       &              & $\model^{(2)}$
  \end{tabular}

  \caption{Example of \kl{$k$-tuple structures} (when $k = 2$)}
  \label{figure: example}
\end{figure}
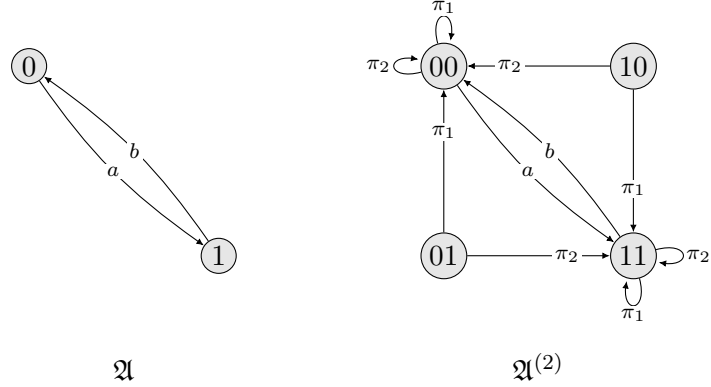
Here, each $2$-tuple $\tuple{v, v'}$ is abbreviated to $v v'$ and the relations of $U$, $Q_i$, $E_{[i, i']}$ on $\model^{(2)}$ are omitted where 
\begin{itemize}
  \item $U^{\model^{(2)}} = \set{\tuple{00, 00}, \tuple{11,11}}$,
  \item $Q_1^{\model^{(2)}} = \set{\tuple{00, 00}, \tuple{00, 10}, \tuple{01, 01}, \tuple{01, 11}, \tuple{10, 00}, \tuple{10, 10}, \tuple{11, 01}, \tuple{11, 11}}$,
  \item $Q_2^{\model^{(2)}} = \set{\tuple{00, 00}, \tuple{00, 01}, \tuple{01, 00}, \tuple{01, 01}, \tuple{10, 10}, \tuple{10, 11}, \tuple{11, 10}, \tuple{11, 11}}$,
  \item $E_{[1,1]}^{\model^{(2)}} = Q_2^{\model^{(2)}}$, $E_{[2,2]}^{\model^{(2)}} = Q_1^{\model^{(2)}}$, and $E_{[1,2]}^{\model^{(2)}} = \jump{\const{I}}^{\model^{(2)}}$.  
\end{itemize}

This construction preserves \kl(structure){finiteness}, so the following holds.
\begin{proposition}\label{proposition: preserving finite}
  For all \kl{structures} $\model$ and $k \ge 1$,
  $\model$ is \kl(structure){finite} if and only if $\model^{(k)}$ is \kl(structure){finite}.
\end{proposition}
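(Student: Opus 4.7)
The plan is to read off the claim directly from the definition of $\model^{(k)}$: by Definition \ref{definition: k-extended structure}, $\domain{\model^{(k)}} = \domain{\model}^{k}$, so the whole proposition is reduced to the elementary cardinality fact that $\#(\domain{\model}^{k}) = (\#\domain{\model})^{k}$ is finite if and only if $\#\domain{\model}$ is finite, given $k \ge 1$ and given that $\domain{\model}$ is non-empty (which is part of the definition of a \kl{structure}).

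More concretely, I would split into the two implications. For the forward direction, assume $\domain{\model}$ is finite with cardinality $n \in \nat$; then $\domain{\model^{(k)}} = \domain{\model}^{k}$ has cardinality $n^{k} \in \nat$, hence $\model^{(k)}$ is \kl(structure){finite}. For the backward direction, assume $\model^{(k)}$ is \kl(structure){finite}; then it suffices to exhibit an injection $\domain{\model} \hookrightarrow \domain{\model^{(k)}}$, and the diagonal map $v \mapsto \tuple{v, \dots, v}$ provides one, so $\domain{\model}$ must also be finite. (Alternatively, one can argue contrapositively: if $\domain{\model}$ is infinite, then $\domain{\model}^{k}$ is infinite because it contains the infinite diagonal subset $\set{\tuple{v, \dots, v} \mid v \in \domain{\model}}$.)

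There is essentially no obstacle here; the only thing worth noting is the assumption $k \ge 1$, which is needed so that $\domain{\model}^{k}$ is not the singleton set $\{\tuple{\,}\}$ (i.e., so that finiteness of $\model^{(k)}$ actually transfers back to $\model$). Since this hypothesis is already part of the statement, the argument is immediate and I would keep the written proof to at most two or three sentences.
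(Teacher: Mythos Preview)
Your proposal is correct and matches the paper's treatment: the paper gives no explicit proof for this proposition, simply noting that ``this construction preserves \kl(structure){finiteness}, so the following holds'', which is exactly the observation that $\domain{\model^{(k)}} = \domain{\model}^{k}$ makes the equivalence immediate. Your two-direction argument (cardinality $n^{k}$ for the forward direction, the diagonal injection for the backward direction) is the natural way to spell this out if one wanted to.
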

Additionally note that in \kl{$k$-tuple structures}, $U, Q_i, E_{[i, i']}$ can be defined by using $\pi_{i}$ as follows:
\begin{align*}
  U           & = \bigcap_{1 \le j \le k} \pi_j                                  &
  E_{[i, i']} & = \bigcap_{i \le j \le i'} \pi_j \cdot \pi_{j}^{\smile}          &
  Q_{i}       & = \bigcap_{1 \le j \le k; j \neq i} \pi_j \cdot \pi_{j}^{\smile}
\end{align*}
Thus, $U, Q_i, E_{[i, i']}$ does not change the expressive power; they are introduced to reduce the output size to be linear.
By using $E_{[1, i - 1]}$ and $E_{[i + 1, k]}$, we can succinctly express $Q_i$ as $Q_i = E_{[1, i - 1]} \cap E_{[i + 1, k]}$.

We show that the class of (the \kl(structure){isomorphism closure} of) \kl{$k$-tuple structures} can be characterized by using \kl{equations} in CoR.
Let $\Gamma^{(k)}$ be the following finite set of \kl{equations} where $i$ ranges over $1 \le i \le k$ and $E_{[1, 0]}$ and $E_{[k+1, k]}$ are the notations for denoting the \kl{term} $\top$:
\begin{align}
   \label{equation: U} & U = \bigcap_{1 \le j \le k} \pi_j  \\
   \label{equation: EL}                                                          & E_{[1, i]} = E_{[1, i - 1]} \cap (\pi_{i} \cdot \pi_{i}^{\smile})                                                 \\
   \label{equation: ER} & E_{[i, k]} = E_{[i + 1, k]}              \cap (\pi_{i} \cdot \pi_{i}^{\smile})                                             \\                                                       \label{equation: Q} & Q_i = E_{[1, i-1]} \cap E_{[i+1, k]}                                                                                                                                                                                                     \\
   \label{equation: U is a subset of I}  & U \le \const{I}                                                                                                                                                                         \\
   \label{equation: pi is functional} & \pi_i^{\smile} \cdot \pi_i \le \const{I}                                                                                                                                                                   \\
   \label{equation: pi is function} & \const{I} \le \pi_i \cdot U \cdot \pi_i^{\smile}                                         \\
   \label{equation: existence} & \top \cdot U \le Q_i \cdot \pi_{i}                                                       \\
   \label{equation: same} & \const{I} = E_{[1, k]}                                                                                                                                      \\
   \label{equation: non-empty} & \top \cdot U \cdot \top = \top                                                                                                            \\
  \label{equation: a U} & a \le U \cdot \top \cdot U
\end{align}
\Cref{equation: U,equation: EL,equation: ER,equation: Q} define $U$, $E_{[i, i']}$, and $Q_i$, respectively.
\Cref{equation: U is a subset of I} expresses that $U$ is a subset of the identity relation.
\Cref{equation: pi is functional,equation: pi is function} express that 
$\pi_i$ is a left-total function into $U$, namely 
$\pi_i$ is a function relation (\Cref{equation: pi is functional} implies that $\pi_i$ is functional and \Cref{equation: pi is function} implies that $\pi_i$ is left-total)
and its range is a subset of $U$.
\Cref{equation: existence} means that the vertex $\tuple{u_1, \dots, u_{i-1}, u_{i}', u_{i+1}, \dots, u_k}$ exists for every vertex $\tuple{u_1, \dots, u_k}$ and every $u_{i}'$ in $U$.
\Cref{equation: same} implies that if each $\pi_j$-images of two vertices are the same, then the vertices themselves are the same.
\Cref{equation: non-empty} expresses that the relation $U$ is not empty.
\Cref{equation: a U} expresses that the domain and the range of the relation $a$ are subsets of $U$.

For a class $\mathcal{C}$ of \kl{structures},
we write $\mathrm{I}(\mathcal{C})$ for the \intro*\kl(structure){isomorphism closure} of $\mathcal{C}$: the minimal class $\mathcal{C}'$ subsuming $\mathcal{C}$ such that, if $\model[1] \in \mathcal{C}$ and $\model[2]$ is \kl(structure){isomorphic} to $\model[1]$, then $\model[2] \in \mathcal{C}'$.
The set $\Gamma^{(k)}$ can characterize the class of the \kl(structure){isomorphism closure} of \kl{$k$-tuple structures} as follows:
\begin{lemma}\label{lemma: characterize}
  Let $\model$ be a \kl{structure} over $\sig^{(k)}$.
  Then we have:
  \[\model \models \bigwedge \Gamma^{(k)} \quad\iff\quad \model \in \mathrm{I}(k\mbox{-TUPLE}).\]
\end{lemma}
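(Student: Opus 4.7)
My plan is to prove the two directions separately, with $(\Rightarrow)$ being the substantive one.

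For the easy direction $(\Leftarrow)$, I use isomorphism-invariance of each equation: it suffices to verify that every canonical $\model_0^{(k)}$ satisfies equations (\ref{equation: U})--(\ref{equation: a U}), which unfolds directly from Definition \ref{definition: k-extended structure}.

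For $(\Rightarrow)$, I will construct from $\model \models \bigwedge \Gamma^{(k)}$ a base structure $\model_0$ over $\sig$ together with an isomorphism $\phi\colon \model \to \model_0^{(k)}$. First I set $X \defeq \{v \in \domain{\model} \mid \tuple{v,v} \in U^{\model}\}$; equations (\ref{equation: U is a subset of I}) and (\ref{equation: non-empty}) together with non-emptiness of $\domain{\model}$ ensure $X$ is non-empty. Let $\model_0$ have universe $X$ with $a^{\model_0} \defeq a^{\model}$ for each $a \in \sig$, which is well-defined by (\ref{equation: a U}). Equations (\ref{equation: pi is functional}) and (\ref{equation: pi is function}), combined with $U \le \const{I}$, say that each $\pi_i^{\model}$ is the graph of a total function $\domain{\model} \to X$; write $\pi_i^{\model}(v)$ for its value. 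Define $\phi(v) \defeq \tuple{\pi_1^{\model}(v), \ldots, \pi_k^{\model}(v)}$.

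A useful preliminary step is to unfold the compound predicates. Iterating (\ref{equation: EL}) from the convention $E_{[1, 0]} = \top$ gives $E_{[1, i]}^{\model} = \bigcap_{j=1}^{i} \jump{\pi_j \cdot \pi_j^{\smile}}^{\model}$; iterating (\ref{equation: ER}) gives $E_{[i, k]}^{\model} = \bigcap_{j=i}^{k} \jump{\pi_j \cdot \pi_j^{\smile}}^{\model}$; hence (\ref{equation: Q}) and (\ref{equation: same}) yield $Q_i^{\model} = \bigcap_{j \neq i} \jump{\pi_j \cdot \pi_j^{\smile}}^{\model}$ and $\jump{\const{I}}^{\model} = \bigcap_{j=1}^{k} \jump{\pi_j \cdot \pi_j^{\smile}}^{\model}$, respectively. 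Each $\jump{\pi_j \cdot \pi_j^{\smile}}^{\model}$ is the equivalence ``agree on $\pi_j$-value''. Injectivity of $\phi$ is immediate from the last identity. For surjectivity, given $\tuple{u_1, \ldots, u_k} \in X^k$, I fix any $v_0 \in \domain{\model}$ and iteratively build $v_1, \ldots, v_k$: at step $i$, (\ref{equation: existence}) applied to $\tuple{v_{i-1}, u_i} \in \jump{\top \cdot U}^{\model}$ produces a witness $v_i$ with $\tuple{v_{i-1}, v_i} \in Q_i^{\model}$ and $\pi_i^{\model}(v_i) = u_i$, and the $Q_i$-characterization above guarantees no other $\pi_j$-value changes, so $\phi(v_k) = \tuple{u_1, \ldots, u_k}$.

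Finally, I verify that $\phi$ preserves every predicate of $\sig^{(k)}$. The linchpin is that for $v \in X$, equation (\ref{equation: U}) forces $\tuple{v, v} \in \pi_j^{\model}$ for every $j$, whence $\pi_j^{\model}(v) = v$ and $\phi(v) = \tuple{v, \ldots, v}$. Preservation of $a \in \sig$ then follows from (\ref{equation: a U}); preservation of $U$ from $U \le \const{I}$ and the definition of $X$; preservation of $\pi_i$ from the definition of $\phi$; preservation of $E_{[\cdot,\cdot]}$ and $Q_i$ from the $\pi_j$-characterizations above matched with the analogous facts in $\model_0^{(k)}$. The main obstacle I expect is the surjectivity step, because it demands that the $Q_i$-jump alter only the $i$-th coordinate of $\phi$—precisely what the interaction of (\ref{equation: existence}) with (\ref{equation: Q}), (\ref{equation: EL}), (\ref{equation: ER}) is designed to enforce.
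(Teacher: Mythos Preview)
Your proposal is correct and follows essentially the same approach as the paper: both define the base universe as $X=\{v\mid \tuple{v,v}\in U^{\model}\}$, take $\phi(v)=\tuple{\pi_1^{\model}(v),\dots,\pi_k^{\model}(v)}$, obtain injectivity from (\ref{equation: same}), surjectivity by iterating (\ref{equation: existence}) through the coordinates, and then verify predicate preservation using the fact that $v\in X$ implies $\phi(v)=\tuple{v,\dots,v}$. Your explicit upfront unfolding of $E_{[1,i]}$, $E_{[i,k]}$, and $Q_i$ as intersections of the kernels $\jump{\pi_j\cdot\pi_j^{\smile}}$ is a nice touch that the paper leaves more implicit, but the overall argument is the same.
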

\begin{proof}
  ($\Longleftarrow$):
  This direction can be shown by checking that each equation holds on \kl{$k$-tuple structures}.

  ($\Longrightarrow$):
  Let $U_0 = \set{v \mid \tuple{v, v} \in U^{\model}}$.
  Combining $\pi_i^{\smile} \cdot \pi_i \le \const{I}$ (\ref{equation: pi is functional}), $\const{I} \le \pi_i \cdot U \cdot \pi_i^{\smile}$ (\ref{equation: pi is function}), and $U \le \const{I}$ (\ref{equation: U is a subset of I}) yields that
  $\pi_i^{\model}$ is a function from $\domain{\model}$ to $U_0$.
  Let $f \colon \domain{\model} \to U_0^{k}$ be the function defined by $f(v) = \tuple{\pi_1^{\model}(v), \dots, \pi_k^{\model}(v)}$.
  Then $f$ is bijective as follows.
  Let $v_0 \in \domain{\model}$ be an arbitrary vertex.
  Let $w_1, \dots, w_k$ be s.t.\ $f(v_0) = \tuple{w_1, \dots, w_k}$.
  For any $w_1' \in U_0$,
  by $\pi_{1} \cdot \top \cdot U \le Q_1 \cdot \pi_{1}$ (\ref{equation: existence}),
  $\tuple{v_0, w_1} \in \pi_{1}^{\model}$, $\tuple{w_1, w_1'} \in \jump{\top}^{\model}$, and $w_1' \in U_0$,
  there is some $v_1$ such that $\tuple{v_0, v_1} \in Q_1^{\model}$ and $\tuple{v_1, w_1'} \in \pi_{1}^{\model}$.
  Then by $Q_1 = E_{[2, k]}$ (\ref{equation: Q}), we have $f(v_1) = \tuple{w_1', w_2, \dots, w_k}$.
  Similarly, for any $w_2' \in U_0$,
  by $\pi_{2} \cdot \top \cdot U \le Q_2 \cdot \pi_{2}$ (\ref{equation: existence}),
  $\tuple{v_1, w_2} \in \pi_{2}^{\model}$, $\tuple{w_2, w_2'} \in \jump{\top}^{\model}$, and $w_2' \in U_0$,
  there is some $v_2$ such that $\tuple{v_1, v_2} \in Q_2^{\model}$ and $\tuple{v_2, w_2'} \in \pi_{2}^{\model}$.
  Then by $Q_2 = E_{[1, 1]} \cap E_{[3, k]}$ (\ref{equation: Q}), we have $f(v_2) = \tuple{w_1', w_2', w_3, \dots, w_k}$.
  By applying this method iteratively, we have that for any $w_1', \dots, w_k' \in U_0$,
  there is some $v$ such that $f(v) = \tuple{w_1', \dots, w_k'}$.
  Hence, $f$ is surjective.
  Also, if $f(v) = \tuple{w_1, \dots, w_k} = f(v')$, then by $\const{I} = E_{[1, k]}$ (\ref{equation: same}),
  we have $v = v'$.
  Hence, $f$ is injective.
  Therefore, $f$ is bijective.
  
  Note that for each $v \in \domain{\model}$,
  $v \in U_0$ iff
  $\pi_{1}^{\model}(v) = \dots = \pi_{k}^{\model}(v)$ (by $U = \bigcap_{1 \le j \le k} \pi_j$ (\ref{equation: U})) iff
  $f(v) = \tuple{w, \dots, w}$ for some $w$ (by the definition of $f$) iff
  $f(v) = \tuple{v, \dots, v}$ (by $\bigcap_{1 \le j \le k} \pi_j \le \const{I}$ (\ref{equation: U})(\ref{equation: U is a subset of I})).
  Thus, $v \in U_0$ iff $f(v) = \tuple{w, \dots, w}$ for some $w$ iff $f(v) = \tuple{v, \dots, v}$.
  We now define $\model[2]$ as the \kl{structure} over $\sig$, where
  \begin{align*}
    \domain{\model[2]} & = U_0                                  &
    a^{\model[2]}      & = a^{\model} \mbox{ for $a \in \sig$}.
  \end{align*}
  Here, $\model[2]$ is indeed an \kl{structure}, because $U_0$ is not empty by $\top \cdot U \cdot \top = \top$ (\ref{equation: non-empty})
  and $a^{\model[2]} \subseteq U_0^2$ by $a \le U \cdot \top \cdot U$ (\ref{equation: a U}).
  Then the bijection $f$ is an \kl(structure){isomorphism} from $\model[1]$ to $\model[2]^{(k)}$, as follows.
   \begin{itemize}
    \item For $a \in \sig$:
    Let $v, v' \in \domain{\model[1]}$ be arbitrary vertices.
    We distinguish the following cases.
    \begin{itemize}
      \item Case $v \not\in U_0$ or $v' \not\in U_0$:
      By $U \le \const{I}$ (\ref{equation: U is a subset of I}) and $a \le U \cdot \top \cdot U$ (\ref{equation: a U}),
      we have $\tuple{v, v'} \not\in a^{\model[1]}$.
      Also, in this case, $f(v)$ or $f(v')$ is not of the form $\tuple{w, \dots, w}$ for any $w$ (by $v\not\in U_0$ or $v' \not\in U_0$);
      thus by the construction of $\model[2]^{(k)}$, we have $\tuple{f(v), f(v')} \not\in a^{\model[2]^{(k)}}$.
      \item Case $v, v' \in U_0$:
      Then we have $f(v) = \tuple{v, \dots, v}$ and $f(v') = \tuple{v', \dots, v'}$.
      Thus, we have:
      $\tuple{v, v'} \in a^{\model[1]}$ iff $\tuple{v, v'} \in a^{\model[2]}$ iff $\tuple{f(v), f(v')} \in a^{\model[2]^{(k)}}$ (by the construction of $\model[2]^{(k)}$).
    \end{itemize}
    Hence, $\tuple{v, v'} \in a^{\model[1]}$ iff $\tuple{f(v), f(v')} \in a^{\model[2]^{(k)}}$.

    \item For $\pi_i$:
    Let $v, v' \in \domain{\model[1]}$ be arbitrary vertices.
    We distinguish the following cases.
    \begin{itemize}
      \item Case $v' \not\in U_0$:
      By $U \le \const{I}$ (\ref{equation: U is a subset of I}) and $\const{I} \le \pi_i \cdot U \cdot \pi_i^{\smile}$ (\ref{equation: pi is function}),
      we have $\tuple{v, v'} \not\in \pi_i^{\model[1]}$.
      Also, in this case, $f(v')$ is not of the form $\tuple{w, \dots, w}$ for any $w$ (by $v' \not\in U_0$);
      thus by the construction of $\model[2]^{(k)}$, $f(v')$ is not in the range of $\pi^{\model[2]^{(k)}}_{i}$.
      Hence, $\tuple{f(v), f(v')} \not\in \pi_i^{\model[2]^{(k)}}$.
      \item Case $v' \in U_0$:
      Then we have $f(v') = \tuple{v', \dots, v'}$.
      Thus, we have:
      $\tuple{v, v'} \in \pi_i^{\model[1]}$ iff
      the $i$-th element of $f(v)$ is $v'$ (by the definition of $f$) iff
      $\tuple{f(v), \tuple{v', \dots, v'}} \in \pi_i^{\model[2]^{(k)}}$ (by the construction of $\model[2]^{(k)}$) iff
      $\tuple{f(v), f(v')} \in \pi_i^{\model[2]^{(k)}}$ (by $\tuple{v', \dots, v'} = f(v')$).
    \end{itemize}
    Hence, $\tuple{v, v'} \in \pi_i^{\model[1]}$ iff $\tuple{f(v), f(v')} \in \pi_i^{\model[2]^{(k)}}$.

    \item For $U, E_{[i, i']}, Q_i$:
    By \Cref{equation: U,equation: EL,equation: ER,equation: Q} with the fact that $f$ is an \kl(structure){isomorphism} w.r.t.\ $\pi_i$ as above.
  \end{itemize}
  Hence, this completes the proof.
\end{proof}

Using \kl{$k$-tuple structures}, we can give the following translation.
\begin{definition}
  Let $k \ge 1$ and $X = \set{x_1, \dots, x_k}$ where $x_1, \dots, x_k$ are pairwise distinct \kl{variables}.
  For each \kl{formula} $\fml$ of $\mathrm{V}(\fml) \subseteq X$,
  the \kl{term} $\mathrm{T}^{(k)}(\fml)$ is inductively defined as follows:
  \begin{align*}
    \mathrm{T}^{(k)}(a(x_i, x_j))           & \quad\defeq\quad (\pi_{i} \cdot a \cdot \pi_{j}^{\smile}) \cap \const{I}                  \\
    \mathrm{T}^{(k)}(\lnot \fml[2])         & \quad\defeq\quad (\mathrm{T}^{(k)}(\fml[2]))^{-}                                          \\
    \mathrm{T}^{(k)}(\fml[2] \land \fml[3]) & \quad\defeq\quad \mathrm{T}^{(k)}(\fml[2]) \cap \mathrm{T}^{(k)}(\fml[3])                 \\
    \mathrm{T}^{(k)}(\exists x_i, \fml[2])  & \quad\defeq\quad (Q_i \cdot \mathrm{T}^{(k)}(\fml[2]) \cdot Q_i^{\smile}) \cap \const{I}.
  \end{align*}
\end{definition}
\begin{lemma}\label{lemma: equiv}
  Let $k \ge 1$ and $X = \set{x_1, \dots, x_k}$ where $x_1, \dots, x_k$ are pairwise distinct \kl{variables}.
  Let $\model$ be a \kl{structure}.
  For all \kl{formulas} $\fml$ of $\mathrm{V}(\fml) \subseteq X$
  and all $v_1, \dots, v_k \in \domain{\model}$, we have:
  \[\jump{\mathrm{T}^{(k)}(\fml)}^{\model^{(k)}} \quad=\quad \set{\tuple{\tuple{v_1, \dots, v_k}, \tuple{v_1, \dots, v_k}} \mid \set{x_1 \mapsto v_1, \dots, x_k \mapsto v_k} \in \jump{\fml}^{\model} \restriction X}.\]
  Here, $\jump{\fml}^{\model} \restriction X$ denotes the set $\set{f \restriction X \mid f \in \jump{\fml}^{\model}}$ where
  $f \restriction X$ is the restriction of $f$ to $X$.
\end{lemma}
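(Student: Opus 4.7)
The plan is to prove the statement by structural induction on $\fml$, matching each clause in the definition of $\mathrm{T}^{(k)}$ against the corresponding clause of $\jump{\cdot}^{\model}$. Writing $\vec{v}$ for $\tuple{v_1, \dots, v_k}$, the three facts I would read directly off \Cref{definition: k-extended structure} are: $\tuple{\vec{v}, \vec{w}} \in \pi_i^{\model^{(k)}}$ iff $\vec{w} = \tuple{v_i, \dots, v_i}$; $\tuple{\vec{u}, \vec{w}} \in a^{\model^{(k)}}$ iff both $\vec{u}, \vec{w}$ are constant tuples $\tuple{u, \dots, u}, \tuple{w, \dots, w}$ with $\tuple{u, w} \in a^{\model}$; and $\tuple{\vec{v}, \vec{v}'} \in Q_i^{\model^{(k)}}$ iff $\vec{v}$ and $\vec{v}'$ agree on every coordinate except possibly position $i$.

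For the base case $\fml = a(x_i, x_j)$, I would directly unfold $(\pi_i \cdot a \cdot \pi_j^{\smile}) \cap \const{I}$ in $\model^{(k)}$: the outer $\cap \const{I}$ forces source and target of any pair in it to be a common $\vec{v}$, and the inner composition, via the first two facts above, is equivalent to $\tuple{v_i, v_j} \in a^{\model}$, which is exactly the FO semantics of $a(x_i, x_j)$ under the assignment $x_\ell \mapsto v_\ell$. The $\land$ case follows immediately from the induction hypothesis together with the semantic clause for $\cap$; the $\lnot$ case likewise, reading the asserted equality as a characterisation on the diagonal pairs of $\model^{(k)}$ picked out by the surrounding context.

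The existential case is the substantive one. After fixing a pair that $\cap \const{I}$ forces to be of the form $\tuple{\vec{v}, \vec{v}}$, the composition $Q_i \cdot \mathrm{T}^{(k)}(\fml[2]) \cdot Q_i^{\smile}$ is inhabited there iff there exist intermediate $k$-tuples $\vec{v}', \vec{v}''$ with $\tuple{\vec{v}, \vec{v}'} \in Q_i^{\model^{(k)}}$, $\tuple{\vec{v}', \vec{v}''} \in \jump{\mathrm{T}^{(k)}(\fml[2])}^{\model^{(k)}}$, and $\tuple{\vec{v}, \vec{v}''} \in Q_i^{\model^{(k)}}$. The induction hypothesis applied to $\fml[2]$ forces $\vec{v}' = \vec{v}''$ and demands that the assignment $x_\ell \mapsto v'_\ell$ satisfies $\fml[2]$, while the $Q_i$-characterisation of the outer factors pins $\vec{v}'$ to agree with $\vec{v}$ off coordinate $i$. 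This is precisely the FO clause for $\exists x_i, \fml[2]$.

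The main obstacle, modest in content but requiring care, is the bookkeeping in the existential step: one has to check that the two outer $Q_i$'s symmetrically collapse into the single condition ``differs from $\vec{v}$ at most in the $i$-th coordinate'' only because the middle factor, by the inductive hypothesis, already identifies its source with its target. Without the diagonal shape of $\mathrm{T}^{(k)}(\fml[2])$, the two-sided $Q_i$-sandwich would be strictly weaker than the intended existential quantification. All other cases are routine bookkeeping.
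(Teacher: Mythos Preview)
Your proposal is correct and follows essentially the same structural induction as the paper's own proof: the paper handles the four cases $a(x_i,x_j)$, $\lnot$, $\land$, and $\exists x_i$ in exactly the way you describe. In particular, in the existential case the paper explicitly invokes the inductive hypothesis to record that $\jump{\mathrm{T}^{(k)}(\fml[2])}^{\model^{(k)}} \subseteq \jump{\const{I}}^{\model^{(k)}}$, which is precisely your ``diagonal shape of the middle factor'' observation that makes the two-sided $Q_i$-sandwich collapse to the intended existential condition.
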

\begin{proof}
  By induction on the structure of $\fml$.

  Case $\fml = a(x_i, x_j)$:
  Since $\mathrm{T}^{(k)}(\fml) = (\pi_{i} \cdot a \cdot \pi_{j}^{\smile}) \cap \const{I}$, we have:
  \begin{align*}
     & \jump{(\pi_{i} \cdot a \cdot \pi_{j}^{\smile}) \cap \const{I}}^{\model^{(k)}}                                                                                                                      \\
     & = \set{\tuple{\tuple{v_1, \dots, v_k}, \tuple{v_1, \dots, v_k}} \mid \tuple{\tuple{v_i, \dots, v_i}, \tuple{v_j, \dots, v_j}} \in a^{\model^{(k)}}}                                                \\
     & = \set{\tuple{\tuple{v_1, \dots, v_k}, \tuple{v_1, \dots, v_k}} \mid \tuple{v_i, v_j} \in a^{\model}}                                                            \tag{Def.\ of $a^{\model^{(k)}}$} \\
     & = \set{\tuple{\tuple{v_1, \dots, v_k}, \tuple{v_1, \dots, v_k}} \mid \set{x_1 \mapsto v_1, \dots, x_k \mapsto v_k} \in \jump{a(x_i, x_j)}^{\model} \restriction X}
  \end{align*}

  Case $\fml = \lnot \fml[2]$:
  Since $\mathrm{T}^{(k)}(\fml) = \mathrm{T}^{(k)}(\fml[2])^{-}$, we have:
  \begin{align*}
     & \jump{\mathrm{T}^{(k)}(\fml[2])^{-}}                                                                                                                                                                        \\
     & = \domain{\model^{(k)}}^{2} \setminus \jump{\mathrm{T}^{(k)}(\fml[2])}                                                                                                                                      \\
     & = \domain{\model^{(k)}}^{2} \setminus \set{\tuple{\tuple{v_1, \dots, v_k}, \tuple{v_1, \dots, v_k}} \mid \set{x_1 \mapsto v_1, \dots, x_k \mapsto v_k} \in \jump{\fml[2]}^{\model} \restriction X} \tag{IH} \\
     & = \set{\tuple{\tuple{v_1, \dots, v_k}, \tuple{v_1, \dots, v_k}} \mid \set{x_1 \mapsto v_1, \dots, x_k \mapsto v_k} \not\in \jump{\fml[2]}^{\model} \restriction X}                                          \\
     & = \set{\tuple{\tuple{v_1, \dots, v_k}, \tuple{v_1, \dots, v_k}} \mid \set{x_1 \mapsto v_1, \dots, x_k \mapsto v_k} \in \jump{\lnot \fml[2]}^{\model} \restriction X}
  \end{align*}

  Case $\fml = \fml[2] \land \fml[3]$:
  Since $\mathrm{T}^{(k)}(\fml) = \mathrm{T}^{(k)}(\fml[2]) \cap \mathrm{T}^{(k)}(\fml[3])$, we have:
  \begin{align*}
     & \jump{\mathrm{T}^{(k)}(\fml[2]) \cap \mathrm{T}^{(k)}(\fml[3])}                                                                                                                                                       \\
     & =  \jump{\mathrm{T}^{(k)}(\fml[2])}^{\model^{(k)}} \cap \jump{\mathrm{T}^{(k)}(\fml[3])}^{\model^{(k)}}                                                                                                               \\
     & = \set{\tuple{\tuple{v_1, \dots, v_k}, \tuple{v_1, \dots, v_k}} \mid \set{x_1 \mapsto v_1, \dots, x_k \mapsto v_k} \in (\jump{\fml[2]}^{\model}\restriction X) \cap (\jump{\fml[2]}^{\model}\restriction X)} \tag{IH} \\
     & = \set{\tuple{\tuple{v_1, \dots, v_k}, \tuple{v_1, \dots, v_k}} \mid \set{x_1 \mapsto v_1, \dots, x_k \mapsto v_k} \in \jump{\fml[2] \land \fml[3]}^{\model}\restriction X}
  \end{align*}

  Case $\fml = \exists x_i, \fml[2]$:
  Since $\mathrm{T}^{(k)}(\fml) = (Q_i \cdot \mathrm{T}^{(k)}(\fml[2]) \cdot Q_i^{\smile}) \cap \const{I}$, we have:
  \begin{align*}
     & \jump{(Q_i \cdot \mathrm{T}^{(k)}(\fml[2]) \cdot Q_i^{\smile}) \cap \const{I}}^{\model^{(k)}}                                                                                                                                                     \\
     & = \set{\tuple{v_1, \dots, v_k}, \tuple{v_1, \dots, v_k} \mid \tuple{\tuple{v_1', \dots, v_k'}, \tuple{v_1', \dots, v_k'}} \in \jump{\mathrm{T}^{(k)}(\fml[2]) }^{\model^{(k)}}                                                                    \\
     & \hspace{7em} \mbox{ for some $v_1', \dots, v_k' \in \domain{\model}$ s.t.\ $v_j' = v_j$ for $1 \le j \le k$ s.t.\ $j \neq i$}} \tag{note that $\jump{\mathrm{T}^{(k)}(\fml[2]) }^{\model^{(k)}} \subseteq \jump{\const{I}}^{\model^{(k)}}$ by IH} \\
     & = \set{\tuple{v_1, \dots, v_k}, \tuple{v_1, \dots, v_k} \mid \set{x_1 \mapsto v_1', \dots, x_k \mapsto v_k'} \in \jump{\fml[2]}^{\model^{(k)}} \restriction X                                                                                     \\
     & \hspace{7em} \mbox{ for some $v_1', \dots, v_k' \in \domain{\model}$ s.t.\ $v_j' = v_j$ for $1 \le j \le k$ s.t.\ $j \neq i$}} \tag{IH}                                                                                                           \\
     & =  \set{\tuple{v_1, \dots, v_k}, \tuple{v_1, \dots, v_k} \mid \set{x_1 \mapsto v_1, \dots, x_k \mapsto v_k} \in \jump{\exists x_i, \fml[2]}^{\model^{(k)}} \restriction X}
  \end{align*}
  Hence, this completes the proof.
\end{proof}
Combining the two above yields the following main lemma.
\begin{lemma}\label{lemma: reduction}
  Let $k \ge 1$ and $X = \set{x_1, \dots, x_k}$ where $x_1, \dots, x_k$ are pairwise distinct \kl{variables}.
  Let $\fml$ be an $\mathrm{FO}_{=}$ \kl{formula} of $\mathrm{V}(\fml) \subseteq X$.
  Then
  \[ \mbox{$(\bigwedge \Gamma^{(k)}) \to \mathrm{T}^{(k)}(\fml) \ge \const{I}$ is [finitely] \kl{valid}} \quad\iff\quad \mbox{$\fml$ is [finitely] \kl{valid}}.\]
\end{lemma}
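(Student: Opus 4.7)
The plan is to combine Lemma \ref{lemma: characterize} (which characterizes \kl{$k$-tuple structures} by $\Gamma^{(k)}$) and Lemma \ref{lemma: equiv} (which relates $\mathrm{T}^{(k)}(\fml)$ to $\fml$ on such structures), using Proposition \ref{proposition: preserving finite} to handle the finiteness cases uniformly. The key bridge is that, over $\model^{(k)}$, the inequation $\mathrm{T}^{(k)}(\fml) \ge \const{I}$ says exactly that \emph{every} diagonal pair $\tuple{\tuple{v_1,\dots,v_k},\tuple{v_1,\dots,v_k}}$ lies in $\jump{\mathrm{T}^{(k)}(\fml)}^{\model^{(k)}}$, which by Lemma \ref{lemma: equiv} is equivalent to $\set{x_1\mapsto v_1,\dots,x_k\mapsto v_k} \in \jump{\fml}^{\model}\restriction X$ for all $v_1,\dots,v_k$. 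Since $\mathrm{V}(\fml)\subseteq X$, this last condition is exactly $\model \models \fml$.

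For the ($\Longleftarrow$) direction, assume $\fml$ is [finitely] \kl{valid} and take any [\kl(structure){finite}] \kl{structure} $\model[2]$ over $\sig^{(k)}$; we must show $\model[2] \models \mathrm{T}^{(k)}(\fml) \ge \const{I}$ whenever $\model[2]\models \bigwedge \Gamma^{(k)}$. If the antecedent fails, the implication holds trivially. Otherwise, by Lemma \ref{lemma: characterize}, $\model[2] \in \mathrm{I}(k\mbox{-TUPLE})$, so $\model[2]$ is \kl(structure){isomorphic} to $\model^{(k)}$ for some \kl{structure} $\model$ over $\sig$; in the finite case, Proposition \ref{proposition: preserving finite} guarantees that $\model$ is itself \kl(structure){finite}. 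By hypothesis, $\model \models \fml$, hence by the bridge above together with Lemma \ref{lemma: equiv}, we get $\model^{(k)} \models \mathrm{T}^{(k)}(\fml) \ge \const{I}$. Since \kl{semantics@cor} is invariant under \kl(structure){isomorphism}, the same holds on $\model[2]$.

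For the ($\Longrightarrow$) direction, assume the CoR \kl(cor){equation} is [finitely] \kl{valid} and take any [\kl(structure){finite}] \kl{structure} $\model$ over $\sig$. Form $\model^{(k)}$, which is [\kl(structure){finite}] by Proposition \ref{proposition: preserving finite}. The easy direction of Lemma \ref{lemma: characterize} gives $\model^{(k)} \models \bigwedge \Gamma^{(k)}$, and so by assumption $\model^{(k)} \models \mathrm{T}^{(k)}(\fml) \ge \const{I}$. Applying Lemma \ref{lemma: equiv} and using $\mathrm{V}(\fml)\subseteq X$, this yields $\jump{\fml}^{\model}=\domain{\model}^{\V}$, i.e.\ $\model \models \fml$.

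The only mildly delicate point is the bookkeeping: making sure the finite case in the ($\Longleftarrow$) direction really produces a finite $\model$ on $\sig$ from a finite $\model[2]$ on $\sig^{(k)}$ (one direction of Proposition \ref{proposition: preserving finite} together with \kl(structure){isomorphism} invariance), and that the condition ``$\fml$ true on $\model$'' is correctly transported across the restriction to $X$ (which is immediate because no free or bound \kl{variable} of $\fml$ lies outside $X$).
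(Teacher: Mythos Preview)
Your proof is correct and follows the same strategy as the paper: combine Lemma~\ref{lemma: characterize} to pass between arbitrary models of $\Gamma^{(k)}$ and \kl{$k$-tuple structures}, Lemma~\ref{lemma: equiv} to translate $\mathrm{T}^{(k)}(\fml) \ge \const{I}$ on $\model^{(k)}$ into truth of $\fml$ on $\model$, and Proposition~\ref{proposition: preserving finite} for the finite cases. The paper packages this as a single chain of biconditionals rather than splitting into two directions, but the content is identical.
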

\begin{shortproof}
  We have:
  \begin{align*}
     & \mbox{$(\bigwedge \Gamma^{(k)}) \to \mathrm{T}^{(k)}(\fml) \ge \const{I}$ is \kl{valid}}                                                                                                                                                                                                                                                                         \\
     & \iff\quad \mbox{$\tuple{v, v} \in \jump{\mathrm{T}^{(k)}(\fml)}^{\model}$ for all $\model$ s.t.\ $\model \models \bigwedge \Gamma^{(k)}$ and all $v \in \domain{\model}$}                                                                                                                                                                                        \\
     & \iff\quad \mbox{$\tuple{\tuple{v_1, \dots, v_k}, \tuple{v_1, \dots, v_k}} \in \jump{\mathrm{T}^{(k)}(\fml)}^{\model[2]^{(k)}}$ for all $\model[2]$ and $v_1, \dots, v_k \in \domain{\model[2]}$}                                                                                                                           \tag{Lem.\ \ref{lemma: characterize}} \\
     & \iff\quad \mbox{$\set{x_1 \mapsto v_1, \dots, x_k \mapsto v_k} \in \jump{\fml}^{\model[2]} \restriction X$ for all $\model[2]$ and $v_1, \dots, v_k \in \domain{\model[2]}$} \tag{Lem.\ \ref{lemma: equiv}}                                                                                                                                                      \\
     & \iff\quad \mbox{$\fml$ is \kl{valid}.}
  \end{align*}
  For \kl{finite validity},
  it is shown in the same way because $\model[2]$ is \kl(structure){finite} iff $\model[2]^{(k)}$ is \kl(structure){finite} (Prop.\ \ref{proposition: preserving finite}).
\end{shortproof}

\begin{theorem}\label{theorem: reduction}
  There is a linear-size translation from $\mathrm{FO}_{=}$ \kl{formulas} into CoR \kl{equations} preserving \kl{validity} and \kl{finite validity}.
\end{theorem}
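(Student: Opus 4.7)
The plan is to combine Lemma \ref{lemma: reduction} with the Schr\"oder-Tarski translation (Proposition \ref{proposition: Schroder-Tarski}). Given an input $\mathrm{FO}_{=}$ \kl{formula} $\fml$, I would set $k \defeq \max(1, \#\mathrm{V}(\fml))$, rename \kl{variables} if necessary so that $\mathrm{V}(\fml) \subseteq \set{x_1, \dots, x_k}$ for some pairwise distinct $x_1, \dots, x_k$, and then form the CoR \kl(cor){quantifier-free formula} $\fml' \defeq (\bigwedge \Gamma^{(k)}) \to \mathrm{T}^{(k)}(\fml) \ge \const{I}$, where schema (\ref{equation: a U}) contributes only instances for the (finitely many) predicate symbols $a$ actually occurring in $\fml$. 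By Lemma \ref{lemma: reduction}, $\fml'$ is \kl{valid} (resp.\ \kl{finitely valid}) iff $\fml$ is. Finally, Proposition \ref{proposition: Schroder-Tarski} converts $\fml'$ into a \kl{semantically equivalent} CoR \kl(cor){equation}, which thus preserves both \kl{validity} and \kl{finite validity}.

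The remaining task is a size estimate. First, $k \le \#\mathrm{V}(\fml) + 1 = O(\|\fml\|)$. Second, $\|\bigwedge \Gamma^{(k)}\| = O(\|\fml\|)$: equation (\ref{equation: U}) has size $O(k)$; the schemas (\ref{equation: EL})--(\ref{equation: existence}) together contribute $O(k)$ equations of constant individual size; equations (\ref{equation: same}) and (\ref{equation: non-empty}) are of constant size; and schema (\ref{equation: a U}) contributes one equation of constant size per predicate symbol occurring in $\fml$, for a total of $O(\|\fml\|)$. Third, $\|\mathrm{T}^{(k)}(\fml)\| = O(\|\fml\|)$ by a straightforward induction on $\fml$, since each defining clause of $\mathrm{T}^{(k)}$ adds only a bounded number of symbols above its recursive call(s). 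Composing these with the linear-size Schr\"oder-Tarski step preserves linearity.

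The conceptual heavy lifting has already been handled by the construction of $\Gamma^{(k)}$ and $\mathrm{T}^{(k)}$, so what remains is essentially bookkeeping. The one design point worth highlighting is that the introduction of the auxiliary symbols $U$, $E_{[i,i']}$, $Q_i$ — rather than inlining definitions such as $Q_i = \bigcap_{1 \le j \le k; j \neq i} \pi_j \cdot \pi_j^{\smile}$ — is precisely what prevents a quadratic blow-up in $k$: a single inlined use of $Q_i$ inside $\mathrm{T}^{(k)}(\exists x_i, \fml[2])$ would already cost $\Theta(k)$, so an $\fml$ with $\Theta(\|\fml\|)$ existential quantifiers would push the output to $\Omega(k \cdot \|\fml\|)$ and break linearity.
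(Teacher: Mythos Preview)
Your proposal is correct and follows exactly the paper's approach: the paper's own proof is the single line ``By Lem.~\ref{lemma: reduction} with the Schr\"oder-Tarski translation (Prop.~\ref{proposition: Schroder-Tarski}).'' Your explicit size accounting---and especially the observation that the auxiliary symbols $U$, $E_{[i,i']}$, $Q_i$ exist precisely to avoid a $\Theta(k)$ cost per quantifier---supplies details the paper leaves implicit, and your restriction of schema~(\ref{equation: a U}) to the predicate symbols actually occurring in $\fml$ is the right way to keep $\Gamma^{(k)}$ finite.
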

\begin{shortproof}
  By Lem.\ \ref{lemma: reduction} with the \kl{Schr{\"o}der-Tarski translation} (Prop.\ \ref{proposition: Schroder-Tarski}).
\end{shortproof}

\begin{remark}\label{remark: relation algebra}
  We do not know whether our translation works for the equational theory of (possibly non-representable) relation algebras.
  This is because our construction is not compatible with \emph{quasi-projective relation algebras}---relation algebras having elements $p$ and $q$ s.t.\ $p^{\smile} \cdot p \le \const{I}$, $q^{\smile} \cdot q \le \const{I}$, and $p^{\smile} \cdot q = \top$.
  (As quasi-projective relation algebras are representable \cite{tarskiFormalizationSetTheory1987}, this class is useful to show that a given translation works also for the equational theory of relation algebras, see, e.g., \cite{andrekaReducingFirstorderLogic2013}.)
\end{remark}

\subsection{Reducing to a more restricted syntax of CoR}
We recall that we can eliminate converse $\bl^{\smile}$ and identity $\const{I}$ by using translations given in \cite{nakamuraUndecidabilityFO3Calculus2019}.
\begin{proposition}[{\cite[Lem.\ 7, 9]{nakamuraUndecidabilityFO3Calculus2019}}]\label{proposition: eliminating converse and identity}
  There is a linear-size translation from CoR \kl{equations} into CoR \kl{equations} without $\bl^{\smile}$ nor $\const{I}$ preserving \kl{validity} and \kl{finite validity}.
\end{proposition}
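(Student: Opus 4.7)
Since the proposition is stated as a direct quotation of \cite[Lem.~7, 9]{nakamuraUndecidabilityFO3Calculus2019}, my plan is to invoke those two translations and compose them. Given a CoR \kl{equation} $\fml$, I first apply the converse-elimination translation to obtain a linear-size equivalent \kl{equation} $\fml'$ without $\bl^{\smile}$, and then apply the identity-elimination translation to $\fml'$ to produce a CoR \kl{equation} $\fml''$ using neither $\bl^{\smile}$ nor $\const{I}$. Since each step is linear-size and preserves \kl{validity} and \kl{finite validity}, the composition inherits both properties.

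For the converse-elimination step, I would first push $\bl^{\smile}$ down to atoms using the standard dualities $(\term[1] \cdot \term[2])^{\smile} = \term[2]^{\smile} \cdot \term[1]^{\smile}$, $(\term[1] \cap \term[2])^{\smile} = \term[1]^{\smile} \cap \term[2]^{\smile}$, $(\term[2]^{-})^{\smile} = (\term[2]^{\smile})^{-}$, $(\term[2]^{\smile})^{\smile} = \term[2]$, and $\const{I}^{\smile} = \const{I}$, and then replace each remaining atomic converse $a^{\smile}$ by a fresh binary predicate $\bar a$. To force $\bar a$ to coincide semantically with $a^{\smile}$ using only $\cap, \cdot, \bl^{-}, \const{I}$, I would conjoin suitable defining axioms giving a converse-free characterization of ``$\bar a$ is the converse of $a$'' on the relevant part of the universe; correctness would then reduce to checking that any counter-model of $\fml$ extends to a counter-model of the translated \kl{equation} via $\bar a^{\model} \defeq (a^{\model})^{\smile}$, and conversely. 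Because naive pushing of $\bl^{\smile}$ can double the term size, linearity has to be enforced by a Tseitin-style flattening in the spirit of Section \ref{section: Tseitin} of the present paper, introducing one fresh predicate per compound subterm and a defining \kl{equation} of constant size per subterm. For the identity-elimination step, I would analogously introduce a fresh predicate $J$ intended to denote $\const{I}$, replace every $\const{I}$ by $J$, and conjoin constant-size axioms forcing $J$ to behave as the identity on the part of the universe reached by the atoms appearing in $\fml'$.

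The main obstacle in both steps is designing the defining axioms for the fresh symbols using only the allowed operators, without reintroducing what has just been eliminated; this is a nontrivial algebraic exercise (handled in the cited paper). Once the axioms are fixed, both the correctness argument (a pair-by-pair semantic check) and the linear size bound (counting added axioms per atom together with the Tseitin overhead) are routine.
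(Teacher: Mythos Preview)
The paper gives no proof of its own for this proposition: it is stated purely as a citation of \cite[Lem.~7, 9]{nakamuraUndecidabilityFO3Calculus2019} and used as a black box thereafter. Your top-level plan---invoke the two cited lemmas and compose them---is therefore exactly what the paper does, and nothing more is required here.

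Your additional sketch of how the cited constructions might work (pushing converse to atoms, introducing fresh predicates $\bar a$ and $J$ with defining axioms, Tseitin-style flattening for linearity) is plausible in spirit but speculative, since the actual content of those lemmas lives in the external reference and is not reproduced in the present paper. One caution: the naive ``push $\bl^{\smile}$ to the leaves'' step you describe is not linear-size even before Tseitin (the number of leaf occurrences can blow up when converse distributes over composition and intersection), so if you wanted to flesh this out you would need to interleave the flattening with the pushing rather than do them sequentially. But since the paper itself treats the proposition as an imported result, none of this detail is needed for the comparison you were asked to make.
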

\begin{proposition}[{\cite[Lem.\ 7, 9, 11, 16]{nakamuraUndecidabilityFO3Calculus2019}}]\label{proposition: to one}
  There is a polynomial-size translation from CoR \kl{equations} into CoR \kl{equations} with one variable and without $\bl^{\smile}$ nor $\const{I}$ preserving \kl{validity} and \kl{finite validity}.
\end{proposition}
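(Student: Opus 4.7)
The plan is to obtain the statement by composing Prop.\ \ref{proposition: eliminating converse and identity} with a subsequent polynomial-size reduction that collapses many binary predicate variables into one. After applying Prop.\ \ref{proposition: eliminating converse and identity}, I may assume the input equation is written in the fragment with operations $\cap$, $\cdot$, $\bl^{-}$ only, and uses finitely many binary predicate variables $a_1, \dots, a_n$. The remaining task is to further translate this equation to one that mentions only a single variable $a$, again without $\const{I}$ or $\bl^{\smile}$.

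For the collapsing step I would encode the $n$ relations via $a$ using definable ``color'' regions. Concretely, I would reserve $n$ pairwise distinguishable vertex classes and define each by a polynomial-size CoR term $C_i(a)$ in the single variable $a$; then I would define tag terms $T_i(a)$ that isolate the sub-relation of $a$ whose endpoints lie in the $i$-th color class, and rewrite the input equation by substituting $T_i(a)$ for each occurrence of $a_i$. To pin down the intended encoding I would conjoin a polynomial-size axiom $\Delta(a)$, playing the same role as $\bigwedge \Gamma^{(k)}$ in Sect.\ \ref{section: reduction}, asserting that the color classes exist, are pairwise disjoint, and carry enough room to realize the original universe in each class. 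Correctness then reduces to the standard dichotomy: a [finite] counter-model of the original equation expands to a [finite] model of $\Delta(a)$ refuting the translated equation, and conversely any [finite] model of $\Delta(a)$ refuting the translation yields, by extracting the tagged sub-structure, a [finite] counter-model of the original equation.

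The main obstacle is the absence of both $\const{I}$ and $\bl^{\smile}$, which rules out the usual ways of ``naming'' subsets of vertices or reversing edges when defining the tag regions. The construction has to build all color terms $C_i(a)$ and the axiom $\Delta(a)$ from compositions and complements of a single relation, and to do so without an exponential blow-up in $n$. I would address this by a cumulative-sum-style cascade analogous to the $E_{[1,i]}$ and $E_{[i,k]}$ construction of Sect.\ \ref{section: reduction}, so that the $n$ tag terms and the axiom $\Delta(a)$ are of total size polynomial in $n$; this is precisely the content of the cited Lemmas 11 and 16 of \cite{nakamuraUndecidabilityFO3Calculus2019}, and composing the two reductions yields the desired polynomial-size translation preserving both \kl{validity} and \kl{finite validity}.
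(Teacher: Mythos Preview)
The paper does not prove this proposition; it is stated purely as a citation of \cite[Lem.\ 7, 9, 11, 16]{nakamuraUndecidabilityFO3Calculus2019}, accompanied only by the remark that the variable-collapsing step (Lem.\ 11 there) already incurs a quadratic blow-up. So there is no in-paper argument to compare your proposal against.

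Your sketch is a reasonable high-level plan, and you correctly isolate the hard step: building tag regions from a single relation using only $\cap$, $\cdot$, $\bl^{-}$. But you do not actually carry that step out; your final paragraph concedes that the construction of $C_i(a)$, $T_i(a)$, and $\Delta(a)$ is ``precisely the content of the cited Lemmas 11 and 16,'' so in the end your proposal defers to the same external reference the paper does. One caution about the part you do sketch: the analogy to the $E_{[1,i]}$/$E_{[i,k]}$ cascade of Sect.\ \ref{section: reduction} is weaker than it looks, since that construction freely uses $\bl^{\smile}$ and several primitive relations $\pi_i$, neither of which is available here; the one-variable, $\{\cap,\cdot,\bl^{-}\}$-only gadgetry in \cite{nakamuraUndecidabilityFO3Calculus2019} has a rather different combinatorial character, and the cumulative-sum idea does not transfer in any obvious way. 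If a self-contained argument was intended, that gadgetry---together with the explicit model-theoretic back-and-forth showing it preserves finiteness---is the missing content.
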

\begin{remark}
  The translation in \cite[Lem.\ 11]{nakamuraUndecidabilityFO3Calculus2019} (for reducing the number of \kl{variables} to one) is not a linear-size translation, as the output size is not bounded in linear (bounded in quadratic) to the input size.
\end{remark}
By Thm.\ \ref{theorem: reduction} with the two propositions above, we also have the following:
\begin{corollary}\label{corollary: reduction}
  There is a linear-size translation from $\mathrm{FO}_{=}$ \kl{formulas} into CoR \kl{equations} without $\bl^{\smile}$ nor $\const{I}$ preserving \kl{validity} and \kl{finite validity}.
\end{corollary}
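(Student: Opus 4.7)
The plan is to simply compose the two linear-size translations already provided. First I would apply Theorem \ref{theorem: reduction} to the input $\mathrm{FO}_{=}$ \kl{formula} $\fml$, obtaining a CoR \kl{equation} $\fml'$ whose \kl{size@cor formula} is bounded by $c_1 \| \fml \|$ for some constant $c_1$, and such that $\fml$ is \kl{valid} (resp.\ \kl{finitely valid}) if and only if $\fml'$ is. Then I would apply Proposition \ref{proposition: eliminating converse and identity} to $\fml'$, obtaining a CoR \kl{equation} $\fml''$ without $\bl^{\smile}$ nor $\const{I}$, of \kl{size@cor formula} bounded by $c_2 \| \fml' \|$, and such that $\fml'$ is \kl{valid} (resp.\ \kl{finitely valid}) iff $\fml''$ is.

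The composition yields $\| \fml'' \| \le c_1 c_2 \| \fml \|$, so the overall translation is linear-size. Preservation of \kl{validity} and \kl{finite validity} follows by transitivity of the biconditionals along the two steps. Note that I would avoid routing through Proposition \ref{proposition: to one}, since that translation is only polynomial-size (and the corollary does not require reduction to a single \kl(cor){variable}).

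There is essentially no hard part here: both building blocks are quoted results, and the claim of linearity uses only that the composition of two linear maps is linear. The only minor point to verify is that the \kl{Schr{\"o}der-Tarski translation} step used in Theorem \ref{theorem: reduction} still yields an output to which Proposition \ref{proposition: eliminating converse and identity} applies directly (it does, since the output is a CoR \kl(cor){equation} over the extended signature $\sig^{(k)}$, which Proposition \ref{proposition: eliminating converse and identity} treats uniformly as \kl(cor){variables}).
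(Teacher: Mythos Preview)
Your proposal is correct and matches the paper's own argument: the paper simply cites Thm.~\ref{theorem: reduction} together with Prop.~\ref{proposition: eliminating converse and identity} (the linear-size elimination of $\bl^{\smile}$ and $\const{I}$), and you spell out exactly that composition. Your remark about avoiding Prop.~\ref{proposition: to one} is apt, since that step is only polynomial and is reserved for the next corollary.
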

\begin{corollary}\label{corollary: reduction to one}
  There is a polynomial-size translation from $\mathrm{FO}_{=}$ \kl{formulas} into CoR \kl{equations} with one variable and without $\bl^{\smile}$ nor $\const{I}$ preserving \kl{validity} and \kl{finite validity}.
\end{corollary}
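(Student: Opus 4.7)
The plan is to obtain the result by composing two already-established translations, exactly as was done one corollary earlier but using the stronger Proposition \ref{proposition: to one} in place of Proposition \ref{proposition: eliminating converse and identity}.

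First, I would apply Theorem \ref{theorem: reduction} to a given $\mathrm{FO}_{=}$ \kl{formula} $\fml$, producing a CoR \kl{equation} $\fml'$ of \kl(cor formula){size} $O(\|\fml\|)$ such that $\fml$ is \kl{valid} (resp.\ \kl{finitely valid}) iff $\fml'$ is \kl{valid} (resp.\ \kl{finitely valid}). Second, I would feed $\fml'$ into the translation of Proposition \ref{proposition: to one}, obtaining a CoR \kl{equation} $\fml''$ with one \kl(cor){variable} and without $\bl^{\smile}$ nor $\const{I}$, of \kl(cor formula){size} polynomial in $\|\fml'\|$, which is \kl{valid} (resp.\ \kl{finitely valid}) iff $\fml'$ is.

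Composition of the two translations is itself a translation; its output has \kl(cor formula){size} polynomial in the \kl(cor formula){size} of the intermediate formula, which is linear in $\|\fml\|$, so the overall bound is polynomial in $\|\fml\|$. Preservation of \kl{validity} and \kl{finite validity} follows by chaining the two equivalences, since each step preserves both properties.

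There is no real obstacle here: both ingredients are stated to preserve \kl{validity} and \kl{finite validity}, and polynomial-size bounds are closed under composition with linear-size bounds. The only thing to note is that, as emphasized in the remark after Proposition \ref{proposition: to one}, the step reducing to a single \kl(cor){variable} is inherently non-linear (at best quadratic in that reference), which is why the conclusion is polynomial-size rather than linear-size; this is unavoidable with the tools cited and is reflected in the statement of the corollary.
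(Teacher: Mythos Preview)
Your proposal is correct and matches the paper's own argument: the corollary is obtained by composing Theorem~\ref{theorem: reduction} with Proposition~\ref{proposition: to one}, exactly as you describe. The paper states this in a single line (``By Thm.~\ref{theorem: reduction} with the two propositions above''), and your additional remarks about the size bound and why the result is only polynomial rather than linear are accurate elaborations of that.
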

Additionally, by the \kl{standard translation} (Prop.\ \ref{proposition: standard translation}), we also have obtained the following.
\begin{corollary}\label{corollary: reduction to one FO3}
  There is a polynomial-size translation from $\mathrm{FO}_{=}$ \kl{formulas} into $\mathrm{FO3}$ \kl{formulas} (without equality) with one binary predicate symbol preserving \kl{validity} and \kl{finite validity}.
\end{corollary}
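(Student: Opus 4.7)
The plan is to compose Corollary \ref{corollary: reduction to one} with the \kl{standard translation} (Proposition \ref{proposition: standard translation}). Given an $\mathrm{FO}_{=}$ \kl{formula} $\fml$, I would first apply Corollary \ref{corollary: reduction to one} to obtain a polynomial-size CoR \kl(cor){equation} $\term[1] = \term[2]$ over a single \kl(cor){term variable} $a$ and without $\bl^{\smile}$ or $\const{I}$, still preserving \kl{validity} and \kl{finite validity}. Next, I would apply the \kl{standard translation} to each of $\term[1]$ and $\term[2]$, producing linear-size $\mathrm{FO3}_{=}$ \kl{formulas} $\fml[1](x_1, x_2)$ and $\fml[2](x_1, x_2)$ that are \kl{semantically equivalent w.r.t.\  binary relations} to $\term[1]$ and $\term[2]$ respectively. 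Finally, I would assemble the \kl(fo){sentence} $\forall x_1, \forall x_2, (\fml[1] \leftrightarrow \fml[2])$.

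The crucial observation is that the \kl{standard translation} introduces an equality atom $x_i = x_j$ only as the translation of the CoR constant $\const{I}$. Since the intermediate \kl(cor){equation} contains no $\const{I}$, the resulting $\fml[1]$ and $\fml[2]$ --- and hence the assembled sentence --- contain no occurrences of equality, yielding a pure $\mathrm{FO3}$ \kl{formula}. Moreover, the \kl{standard translation} uses exactly the binary predicate symbols appearing in the input \kl(cor){term}, so only the single symbol $a$ occurs in the output, and the outer $\forall x_1, \forall x_2$ reuse the two free \kl{variables} already present and introduce no fourth one.

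For the quantitative bound, Corollary \ref{corollary: reduction to one} is polynomial-size and the \kl{standard translation} is linear-size, so their composition remains polynomial. Preservation of \kl{validity} and \kl{finite validity} is immediate: the first step preserves them by hypothesis, and for the second step $\model \models \term[1] = \term[2]$ iff $\jump{\term[1]}^{\model} = \jump{\term[2]}^{\model}$, which by the \kl{semantic equivalence w.r.t.\  binary relations} is equivalent to $\model \models \forall x_1, \forall x_2, (\fml[1] \leftrightarrow \fml[2])$ for every (\kl(structure){finite}) \kl{structure} $\model$. I do not expect any serious obstacle here --- the proof is essentially a transport of results already established in the paper, and the only point demanding real attention is the syntactic check that equality does not sneak back in when reexpressing the outer \kl(cor){equation} as a first-order \kl(fo){sentence}, which is clear since the bi-implication $\fml[1] \leftrightarrow \fml[2]$ is all that is needed.
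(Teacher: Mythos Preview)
Your proposal is correct and follows the same approach as the paper: the paper derives the corollary directly from Corollary~\ref{corollary: reduction to one} via the \kl{standard translation} (Proposition~\ref{proposition: standard translation}), and your argument spells out exactly this composition, including the key observation that the absence of $\const{I}$ in the intermediate CoR equation guarantees the absence of equality atoms in the output.
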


\section{Tseitin translation for CoR}\label{section: Tseitin}
By a similar argument as the \emph{Tseitin translation} \cite{tseitinComplexityDerivationPropositional1983}, which is a translation from propositional formulas into conjunctive normal form preserving \kl(pl){validity} in proposition logic
(see also the Plaisted-Greenbaum translation \cite{plaistedStructurepreservingClauseForm1986} for $\mathrm{FO}_{=}$ and the translation from $\mathrm{FO2}_{=}$ into the Scott class \cite{scottDecisionMethodValidity1962, gradelDecisionProblemTwoVariable1997}),
we can translate into CoR \kl{terms} with bounded alternation of operations.

For each \kl{term} $\term[1]$, we introduce a fresh \kl(cor){variable} $a_{\term[1]}$.
Then for a \kl{term} $\term[1]$, we define the set of \kl{equations} $\Gamma_{\term[1]}$ as follows:
\begin{align*}
  \Gamma_{b}                       & \defeq \set{a_{b} = b}                                                                                              &
  \Gamma_{\term[2]^{-}}            & = \Gamma_{\term[2]} \cup \set{a_{\term[2]^{-}} = a_{\term[2]}^{-}}                                                  &
  \Gamma_{\term[2] \cap \term[3]}  & = \Gamma_{\term[2]} \cup \Gamma_{\term[3]} \cup \set{a_{\term[2] \cap \term[3]} = a_{\term[2]} \cap a_{\term[3]}}     \\
  \Gamma_{\const{I}}               & \defeq \set{a_{\const{I}} = \const{I}}                                                                              &
  \Gamma_{\term[2]^{\smile}}       & = \Gamma_{\term[2]} \cup \set{a_{\term[2]^{\smile}} = a_{\term[2]}^{\smile}}                                        &
  \Gamma_{\term[2] \cdot \term[3]} & = \Gamma_{\term[2]} \cup \Gamma_{\term[3]} \cup \set{a_{\term[2] \cdot \term[3]} = a_{\term[2]} \cdot a_{\term[3]}}
\end{align*}
Then we have the following:
\begin{lemma}\label{lemma: Tseitin transform}
  For all CoR \kl{terms} $\term[1]$, we have:
  \[\mbox{$\term[1] = \top$ is [finitely] \kl{valid}} \quad\iff\quad
    \mbox{$(\bigwedge \Gamma_{\term[1]}) \to a_{\term[1]} = \top$ is [finitely] \kl{valid}.}\]
\end{lemma}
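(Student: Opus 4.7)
The plan is to follow the standard Tseitin-style argument: the fresh variables $a_{\term}$ are constrained by $\Gamma_{\term[1]}$ to be interpreted exactly as $\jump{\term}$, so that the equation $a_{\term[1]} = \top$ serves as a proxy for $\term[1] = \top$ in the presence of the hypothesis $\bigwedge \Gamma_{\term[1]}$. Both directions then come from moving between a structure $\model_0$ over the original signature and its extension over the signature enriched by the fresh variables.

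First, I would establish the key invariant. For any \kl{structure} $\model$ over the extended signature with $\model \models \bigwedge \Gamma_{\term[1]}$, I prove by induction on subterms $\term$ of $\term[1]$ that $a_{\term}^{\model} = \jump{\term}^{\model}$. The base cases $\term = b$ and $\term = \const{I}$ are immediate from the defining equations in $\Gamma_{b}$ and $\Gamma_{\const{I}}$. Each inductive case uses the compositional \kl(cor){semantics} of CoR: e.g.\ for $\term = \term[2] \cap \term[3]$, the equation $a_{\term[2] \cap \term[3]} = a_{\term[2]} \cap a_{\term[3]}$ together with the induction hypotheses gives $a_{\term}^{\model} = a_{\term[2]}^{\model} \cap a_{\term[3]}^{\model} = \jump{\term[2]}^{\model} \cap \jump{\term[3]}^{\model} = \jump{\term}^{\model}$, and the cases for $\bl^{-}$, $\bl^{\smile}$, and $\bl \cdot \bl$ are analogous.

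For the forward direction, assume $\term[1] = \top$ is \kl{valid} (resp.\ \kl{finitely valid}). For any (finite) $\model$ with $\model \models \bigwedge \Gamma_{\term[1]}$, the invariant gives $a_{\term[1]}^{\model} = \jump{\term[1]}^{\model} = \jump{\top}^{\model}$, so $\model \models a_{\term[1]} = \top$. For the backward direction, assume the implication $(\bigwedge \Gamma_{\term[1]}) \to a_{\term[1]} = \top$ is \kl{valid} (resp.\ \kl{finitely valid}). Given any (finite) \kl{structure} $\model_0$ over the original signature, extend it to a \kl{structure} $\model$ with $\domain{\model} = \domain{\model_0}$, unchanged interpretations of all original symbols, and $a_{\term}^{\model} \defeq \jump{\term}^{\model_0}$ for each subterm $\term$ of $\term[1]$ (the remaining fresh symbols may be interpreted arbitrarily). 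By construction $\model \models \bigwedge \Gamma_{\term[1]}$, so the hypothesis yields $a_{\term[1]}^{\model} = \jump{\top}^{\model}$, which unfolds to $\jump{\term[1]}^{\model_0} = \jump{\top}^{\model_0}$.

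Finite validity is handled uniformly because both the forward and backward passage leave the universe intact, so $\domain{\model}$ is finite iff $\domain{\model_0}$ is finite. There is no real obstacle here; the only point requiring care is that the symbols $a_{\term}$ are fresh, so extending $\model_0$ neither alters the interpretations of the original symbols nor creates conflicting definitions among the $a_{\term}$'s, which is why the extension in the backward direction is well-defined.
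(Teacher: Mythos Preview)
Your proof is correct and follows essentially the same approach as the paper: establish by induction that $\Gamma_{\term[1]}$ forces $a_{\term}^{\model} = \jump{\term}^{\model}$ for every subterm, then use that the fresh variables $a_{\term}$ do not occur in $\term[1]$ so that any structure can be adjusted to satisfy $\Gamma_{\term[1]}$ without altering $\jump{\term[1]}$. The only cosmetic difference is that the paper phrases the backward direction as a contrapositive and speaks of modifying an arbitrary $\model$ over the full signature $\sig$ rather than extending one over an ``original'' signature, but this is the same argument.
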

\begin{shortproof}
  For all \kl{structures} $\model$ s.t.\ $\model \models \bigwedge \Gamma_{\term[1]}$,
  we have $\model \models \term[2] = a_{\term[2]}$ for all subterms $\term[2]$ of $\term[1]$, by straightforward induction on $\term[2]$.
  Thus it suffices to prove that $\term[1] = \top$ is [finitely] \kl{valid} $\iff$ $\bigwedge \Gamma_{\term[1]} \to \term[1] = \top$ is [finitely] \kl{valid}.
  Both directions are shown as follows.

  ($\Longrightarrow$):
  Trivial.
  
  ($\Longleftarrow$):
  Since $a_{\term[2]}$ is not occurring in $\term[1]$,
  we can easily transform a \kl{structure} $\model$ s.t.\ $\model \not\models \term[1] = \top$
  into a \kl{structure} $\model'$ s.t.\ $\model' \models \bigwedge \Gamma_{\term[1]}$ and $\model' \not\models \term[1] = \top$ by only modifying $a_{\term[2]}^{\model}$ appropriately.
  Hence this completes the proof.
\end{shortproof}

\begin{example}\label{example: Tseitin Transform}
  The \kl{equation} $((b \cdot c)^{-} \cdot d)^{-} = \top$ is translated into the following \kl{quantifier-free formula} preserving \kl{validity} and \kl{finite validity} (we omit \kl{equations} for each \kl(cor){variables} $b, c, d$, as they are verbose):
  \[\bigwedge \left\{
    \begin{gathered}
      a_{b \cdot c}                      = a_{b} \cdot a_{c},               \quad
      a_{(b \cdot c)^{-}}                = a_{b \cdot c}^{-},                 \\
      a_{(b \cdot c)^{-} \cdot d}        = a_{(b \cdot c)^{-}} \cdot a_{d}, \quad
      a_{((b \cdot c)^{-} \cdot d)^{-}}  = a_{(b \cdot c)^{-} \cdot d}^{-}
    \end{gathered}\right\}
    \to a_{((b \cdot c)^{-} \cdot d)^{-}} = \top\]
  This is \kl{semantically equivalent} to the following \kl{equation}:
  \[(\top \cdot \bigcup \left\{
    \begin{aligned}
      (a_{b \cdot c} \cap (a_{b}^{-} \dagger a_{c}^{-}))                             & ,\  (a_{b \cdot c}^{-} \cap (a_{b} \cdot a_{c})),                                 \\
      (a_{(b \cdot c)^{-}} \cap a_{b \cdot c})                                       & ,\  (a_{(b \cdot c)^{-}}^{-} \cap a_{b \cdot c}^{-}),                              \\
      (a_{(b \cdot c)^{-} \cdot d} \cap (a_{(b \cdot c)^{-}}^{-} \dagger a_{d}^{-})) & ,\  (a_{(b \cdot c)^{-} \cdot d}^{-} \cap (a_{(b \cdot c)^{-}} \cdot a_{d}))   ,  \\
      (a_{((b \cdot c)^{-} \cdot d)^{-}} \cap a_{(b \cdot c)^{-} \cdot d})           & ,\   (a_{((b \cdot c)^{-} \cdot d)^{-}}^{-} \cap a_{(b \cdot c)^{-} \cdot d}^{-})
    \end{aligned}\right\} \cdot \top) \cup a_{((b \cdot c)^{-} \cdot d)^{-}} = \top\]
\end{example}
By using the translation above (and replacing complemented \kl(cor){variables} $b^{-}$ with fresh \kl(cor){variables} $c$ and introducing the axiom $b^{-} = c$), we can translate each CoR \kl{equation} without $\bl^{\smile}$ nor $\const{I}$ into an \kl{equation} of the form $(\top \cdot (\bigcup \Gamma) \cdot \top) \cup a = \top$, where $\Gamma$ is a finite set of \kl{terms} of one of the following forms:
\begin{align*}
   & b \cap c \qquad  b^{-} \cap c^{-} \qquad b \cap (c \dagger d) \qquad b \cap (c \cdot d) \qquad b \cap (c \cap d)
\end{align*}
In this form, the number of alternations of operations, particularly the operations $\cdot$ and $\dagger$ (and similarly, $\cdot$ and $\bl^{-}$), is reduced.
Hence, we have obtained the following:
\begin{theorem}\label{theorem: Tseitin transform}
  There is a linear-size translation from CoR \kl{equations} into \kl{equations} of the form $\term[1] = \top$ preserving \kl{validity} and \kl{finite validity},
  where $\term[1]$ is in the level $\Sigma_{2}^{\mathrm{CoR}}$ of the \intro*\kl{dot-dagger alternation hierarchy} \cite{nakamuraExpressivePowerSuccinctness2022} and $\term$ does not contain $\bl^{\smile}$ nor $\const{I}$.
\end{theorem}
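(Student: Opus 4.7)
The plan is to chain together three previously established ingredients. First I would apply Proposition \ref{proposition: eliminating converse and identity} to turn the given CoR \kl{equation} into one that uses neither $\bl^{\smile}$ nor $\const{I}$; call the result $\term[1] = \term[2]$. By Proposition \ref{proposition: Schroder-Tarski} (Schr\"oder--Tarski), rewrite this as a single \kl{equation} $\term = \top$. Now invoke Lemma \ref{lemma: Tseitin transform} to replace $\term = \top$ by the equivalent $(\bigwedge \Gamma_{\term}) \to a_{\term} = \top$, where every defining \kl{equation} in $\Gamma_{\term}$ has one of the very simple shapes $a_u = a_v$, $a_u = a_v^{-}$, $a_u = a_v \cdot a_w$, $a_u = a_v \cap a_w$ (since $\bl^{\smile}$ and $\const{I}$ have already been eliminated in Step 1).

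Next I would normalize each individual \kl{equation} in $\Gamma_{\term}$ into the form $\term' = \top$ by the same Schr\"oder--Tarski identity $(\term[2] = \term[3]) \leftrightarrow (\term[2] \cap \term[3]) \cup (\term[2]^{-} \cap \term[3]^{-}) = \top$; this produces a small fixed number of disjuncts per \kl{equation}, each of one of the listed shapes $b \cap c$, $b^{-} \cap c^{-}$, $b \cap (c \dagger d)$, $b \cap (c \cdot d)$, $b \cap (c \cap d)$ where $b, c, d$ are \kl(cor){variables}. To keep complements attached only to \kl(cor){variables} (never nested inside another operation), I would, as suggested just before the theorem statement, replace any complemented compound argument arising from the de Morgan expansion by a fresh \kl(cor){variable} $c$ together with an extra axiom $b^{-} = c$, which adds only further \kl{equations} of the same simple shape. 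Finally, consolidate the implication $(\bigwedge \Gamma') \to a_{\term} = \top$ into the single \kl{equation} $(\top \cdot (\bigcup_{\term' \in \Gamma'} \term') \cdot \top) \cup a_{\term} = \top$, exactly as in the construction sketched just before the theorem.

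Correctness (preservation of \kl{validity} and \kl{finite validity}) is inherited directly from Propositions \ref{proposition: eliminating converse and identity}, \ref{proposition: Schroder-Tarski} and Lemma \ref{lemma: Tseitin transform}. The linear-size bound is routine: eliminating converse and identity is linear by Proposition \ref{proposition: eliminating converse and identity}; the set $\Gamma_{\term}$ contributes one defining \kl{equation} per subterm of $\term$, hence $O(\|\term\|)$ many, each of constant size; the Schr\"oder--Tarski normalization applied equation-by-equation adds constant overhead per \kl{equation}; and the final packaging adds only a constant multiplicative factor. The main obstacle, and the only delicate point, is verifying that the outer \kl{term} of the resulting \kl{equation} really lies in $\Sigma_{2}^{\mathrm{CoR}}$ of the \kl{dot-dagger alternation hierarchy}: this amounts to checking that, after the outer dot layer $\top \cdot \bl \cdot \top$ and the subsequent $\bigcup$, each disjunct exhibits at most one further alternation between $\cdot$ and $\dagger$, which is precisely what the restriction to the five listed shapes $b \cap c$, $b^{-} \cap c^{-}$, $b \cap (c \dagger d)$, $b \cap (c \cdot d)$, $b \cap (c \cap d)$ guarantees.
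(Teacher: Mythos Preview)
Your proposal is correct and follows essentially the same route as the paper's proof: first eliminate $\bl^{\smile}$ and $\const{I}$ via Proposition~\ref{proposition: eliminating converse and identity}, then apply the Tseitin-style translation of Lemma~\ref{lemma: Tseitin transform} together with the Schr\"oder--Tarski translation and the fresh-variable trick described just before the theorem, and finally package everything into the single \kl{equation} $(\top \cdot (\bigcup \Gamma) \cdot \top) \cup a_{\term} = \top$. One small terminological slip: after Tseitin the arguments are already \kl(cor){variables}, so what you replace with fresh \kl(cor){variables} are complemented \emph{variables} $b^{-}$ (as the paper states), not ``complemented compound arguments''; otherwise your elaboration of the argument is exactly what the paper's two-sentence proof intends.
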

\begin{proof}
  By Prop.\ \ref{proposition: eliminating converse and identity}, there is a linear-size translation from CoR \kl{equations} into CoR \kl{equations} without $\bl^{\smile}$ nor~$\const{I}$.
  Then, by applying the translation of Lem.\ \ref{lemma: Tseitin transform} (with the \kl{Schr{\"o}der-Tarski translation} (Prop.\ \ref{proposition: Schroder-Tarski})) as above, 
  this completes the proof.
\end{proof}
Hence, the equational theory of the form $\term[1] = \top$, where $\term[1]$ is in the level $\Sigma_{2}^{\mathrm{CoR}}$ of the \kl{dot-dagger alternation hierarchy}, is also undecidable, cf.\ \cite[Prop.\ 24]{nakamuraFiniteVariableOccurrenceFragment2023}\cite[Appendix A]{nakamuraFiniteVariableOccurrenceFragment2023arxiv}.

\subsection{Linear-size conservative reduction to G{\"o}del's class $[\forall^3\exists^*, (0, \omega), (0)]$}
Additionally, we note that by using the argument above, we can give a linear-size translation from $\mathrm{FO}_{=}$ \kl{formulas} into $[\forall^3\exists^*, (0, \omega), (0)]$ \kl{sentences} 
(i.e., \kl{sentences} of the form $\forall x, \forall y, \forall z, \exists w_1, \dots, \exists w_n, \fml$ where $n \ge 0$ and $\fml$ is quantifier-free, has only binary predicate symbols and does not have constant symbols, function symbols or non-binary predicate symbols, see e.g., \cite{borgerClassicalDecisionProblem1997} for the notation of the prefix-vocabulary class. $\mathrm{FO}_{=}$ in this paper corresponds to the class $[\mathrm{all}, (0, \omega), (0)]_{=}$) \cite{kurtEntscheidungsproblemLogischenFunktionenkalkiils1933}\cite[p.\ 440]{borgerClassicalDecisionProblem1997} preserving \kl(fo){satisfiability} and \kl(fo){finite satisfiability}.

For example, let us recall the translated \kl{equation} in Example \ref{example: Tseitin Transform}.
By the \kl{standard translation} (Prop.~\ref{proposition: standard translation}), this \kl{equation} is \kl{semantically equivalent w.r.t.\ binary relations} to the following $\mathrm{FO3}$ \kl{sentence} (without equality):
\begin{align*}
   & \exists x, \exists y, \bigvee \left\{ \begin{aligned}
                                              & (a_{b \cdot c}(x, y) \land (\forall w_1, \lnot a_{b}(x, w_1) \lor \lnot a_{c}(w_1, y))),                                                                                     \\
                                              & \quad (\lnot a_{b \cdot c}(x, y) \land (\exists z, a_{b}(x, z) \land a_{c}(z, y))),                                                                                          \\
                                              & a_{(b \cdot c)^{-}}(x, y) \land a_{b \cdot c}(x, y),\  \lnot a_{(b \cdot c)^{-}}(x, y) \land \lnot a_{b \cdot c}(x,y),                                                 \\
                                              & ((a_{(b \cdot c)^{-} \cdot d}(x, y) \land (\forall w_2, \lnot a_{(b \cdot c)^{-}}(x, w_2) \lor \lnot a_{d}(w_2, y))),                                                        \\
                                              & \quad (\lnot a_{(b \cdot c)^{-} \cdot d}(x, y) \land (\exists z, a_{(b \cdot c)^{-}}(x, z) \land a_{d}(z, y)))),                                                             \\
                                              & a_{((b \cdot c)^{-} \cdot d)^{-}}(x, y) \land a_{(b \cdot c)^{-} \cdot d}(x, y),\  \lnot a_{((b \cdot c)^{-} \cdot d)^{-}}(x, y) \land \lnot a_{(b \cdot c)^{-} \cdot d}(x, y)
                                           \end{aligned} \right\} \\
   & \lor (\forall w_3, \forall w_4, a_{((b \cdot c)^{-} \cdot d)^{-}}(w_3, w_4))
\end{align*}

By taking the prenex normal form of the \kl{sentence} above in the ordering of $x, y, z, w_1, w_2, w_3, w_4$, we can obtain an $[\exists^3\forall^{*}, (0, \omega), (0)]$ \kl{sentence} (note that $(\exists z, \fml[2] \lor \fml[3]) \leftrightarrow ((\exists z, \fml[2]) \lor (\exists z, \fml[3]))$).
Thus, as a corollary of Thm.\ \ref{theorem: Tseitin transform}, we can translate CoR \kl{equations} without $\const{I}$ into $[\exists^3\forall^{*}, (0, \omega), (0)]$ \kl{sentences} preserving \kl{validity} and \kl{finite validity}.
Hence, we also have the following:
\begin{corollary}\label{corollary: Tseitin transform}
  There is a linear-size \kl{conservative reduction} from $\mathrm{FO}_{=}$ \kl{formulas} into $[\forall^3\exists^{*}, (0, \omega), (0)]$ \kl{sentences}.
\end{corollary}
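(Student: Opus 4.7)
The approach is to chain together translations already established in the paper and apply negation duality to recast validity-preservation as satisfiability-preservation. Given an $\mathrm{FO}_{=}$ formula $\varphi$, the plan is first to translate $\lnot \varphi$ through the validity-preserving machinery into a sentence in $[\exists^3 \forall^{*}, (0, \omega), (0)]$, and then to negate to obtain the desired sentence in $[\forall^3 \exists^{*}, (0, \omega), (0)]$ whose (finite) satisfiability tracks that of $\varphi$.

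First, I would apply Theorem~\ref{theorem: reduction} to $\lnot \varphi$, obtaining a CoR equation $e_1 = \top$ of linear size such that $\lnot \varphi$ is (finitely) valid iff $e_1 = \top$ is, equivalently iff $\varphi$ is not (finitely) satisfiable. Next, Theorem~\ref{theorem: Tseitin transform} yields an equation $e_2 = \top$ of the normal form $(\top \cdot (\bigcup \Gamma) \cdot \top) \cup a = \top$, where each $\gamma \in \Gamma$ is one of the five shapes $b \cap c$, $b^{-} \cap c^{-}$, $b \cap (c \cap d)$, $b \cap (c \cdot d)$, $b \cap (c \dagger d)$, and where neither $\bl^{\smile}$ nor $\const{I}$ occurs. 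Both steps are linear-size.

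I would then apply the standard translation (Proposition~\ref{proposition: standard translation}). Since $\top \cdot X \cdot \top = \top$ iff $X$ is non-empty, the equation $e_2 = \top$ translates to the sentence $(\exists x, \exists y, \bigvee_{\gamma \in \Gamma} \varphi_\gamma(x, y)) \lor (\forall w_3, \forall w_4, a(w_3, w_4))$ exemplified in the text, where each $\varphi_\gamma(x, y)$ carries at most one additional bound variable: an $\exists z$ from a composition disjunct $b \cap (c \cdot d)$, a $\forall z$ from a dagger disjunct $b \cap (c \dagger d)$, and none from the other three shapes. To prenex-normalise, I would use $(\exists z, \psi_1) \lor (\exists z, \psi_2) \equiv \exists z, (\psi_1 \lor \psi_2)$ to merge all composition-disjunct existentials into a single $\exists z$, and $(\forall z_1, \psi_1(z_1)) \lor (\forall z_2, \psi_2(z_2)) \equiv \forall z_1, \forall z_2, (\psi_1(z_1) \lor \psi_2(z_2))$ (after $\alpha$-renaming) to pull the dagger-disjunct universals together with the outer $\forall w_3, \forall w_4$ to the right as a single universal block. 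Combining with the outer $\exists x, \exists y$ yields a sentence with prefix $\exists x, \exists y, \exists z, \forall^{*}$ over a quantifier-free matrix in binary predicates, i.e., a sentence in $[\exists^3 \forall^{*}, (0, \omega), (0)]$ of linear size in $\varphi$.

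Finally, negating this sentence flips its prefix to $[\forall^3 \exists^{*}, (0, \omega), (0)]$ and converts validity-preservation into satisfiability-preservation: the negated sentence is (finitely) satisfiable iff $\varphi$ is (finitely) satisfiable, and the mapping $\varphi \mapsto \lnot\Phi$ is therefore the required linear-size conservative reduction. The main bookkeeping obstacle is the prenex step: one must verify that the composition-disjunct existentials indeed collapse to one $\exists z$ (so that the existential prefix stays at exactly three) and that pulling universals from syntactically unrelated dagger disjuncts out of the disjunction does not cause scoping clashes with the $\exists z$ or with the outer $\exists x, \exists y$. Both work out because all of these quantifiers sit directly beneath the same top-level disjunction, so their bound variables can be $\alpha$-renamed apart without affecting semantics.
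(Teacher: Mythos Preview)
Your proposal is correct and follows essentially the same route as the paper's own proof: chain Theorem~\ref{theorem: reduction} with Theorem~\ref{theorem: Tseitin transform} and the standard translation to obtain a linear-size validity-preserving map from $\mathrm{FO}_{=}$ into $[\exists^3\forall^{*}, (0, \omega), (0)]$, and then pass to negated formulas to get the conservative reduction into $[\forall^3\exists^{*}, (0, \omega), (0)]$. Your treatment of the prenex step (merging the composition-disjunct existentials into a single $\exists z$ via distributivity and $\alpha$-renaming the dagger-disjunct universals apart) spells out exactly what the paper compresses into the remark ``taking the prenex normal form \dots\ in the ordering $x, y, z, w_1, w_2, w_3, w_4$'' together with the note $(\exists z, \psi \lor \rho) \leftrightarrow ((\exists z, \psi) \lor (\exists z, \rho))$.
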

\begin{proof}
  By Thm.\ \ref{theorem: Tseitin transform} with the translation above, there is a linear-size translation from $\mathrm{FO}_{=}$ \kl{formulas} into $[\exists^3\forall^{*}, (0, \omega), (0)]$ \kl{sentences} preserving \kl{validity} and \kl{finite validity}.
  Hence, this completes the proof (by considering negated \kl{formulas}).
\end{proof}

\bibliographystyle{fundam}
\bibliography{main-pand}

\begin{thebibliography}{10}
\providecommand{\url}[1]{\texttt{#1}}
\providecommand{\urlprefix}{URL }
\expandafter\ifx\csname urlstyle\endcsname\relax
  \providecommand{\doi}[1]{doi:\discretionary{}{}{}#1}\else
  \providecommand{\doi}{doi:\discretionary{}{}{}\begingroup
  \urlstyle{rm}\Url}\fi
\providecommand{\eprint}[2][]{\url{#2}}

\bibitem{andrekaReducingFirstorderLogic2013}
Andréka H, Németi I.
\newblock Reducing First-order Logic to Df3, Free Algebras.
\newblock In: Cylindric-like Algebras and Algebraic Logic, Bolyai Society
  Mathematical Studies, pp. 15--35. Springer.
\newblock ISBN 978-3-642-35025-2, 2013.
\newblock \doi{10.1007/978-3-642-35025-2_2}.

\bibitem{boolosComputabilityLogic2007}
Boolos GS, Burgess JP, Jeffrey RC.
\newblock Computability and Logic.
\newblock Cambridge University Press, 5 edition, 2007.
\newblock ISBN 978-0-521-87752-7.
\newblock \doi{10.1017/CBO9780511804076}.

\bibitem{borgerClassicalDecisionProblem1997}
Börger E, Grädel E, Gurevich Y.
\newblock The Classical Decision Problem.
\newblock Springer, 1997.
\newblock ISBN 3-540-42324-9.
\newblock \urlprefix\url{https://link.springer.com/9783540423249}.

\bibitem{givantCalculusRelations2017}
Givant S.
\newblock The calculus of relations.
\newblock In: Introduction to Relation Algebras, volume~1, pp. 1--34. Springer
  International Publishing, 2017.
\newblock \doi{10.1007/978-3-319-65235-1_1}.

\bibitem{gradelDecisionProblemTwoVariable1997}
Grädel E, Kolaitis PG, Vardi MY.
\newblock On the Decision Problem for Two-Variable First-Order Logic.
\newblock \emph{The Bulletin of Symbolic Logic}, 1997.
\newblock \textbf{3}(1):53--69.
\newblock \doi{10.2307/421196}.

\bibitem{kurtEntscheidungsproblemLogischenFunktionenkalkiils1933}
Kurt G.
\newblock Zum Entscheidungsproblem des logischen Funktionenkalkiils.
\newblock \emph{Monatshefte f. Mathematik u. Physik}, 1933.
\newblock pp. 40:433--443.

\bibitem{madduxFinitaryAlgebraicLogic1989}
Maddux RD.
\newblock Finitary Algebraic Logic.
\newblock \emph{Mathematical Logic Quarterly}, 1989.
\newblock \textbf{35}(4):321--332.
\newblock \doi{10.1002/malq.19890350405}.

\bibitem{nakamuraUndecidabilityFO3Calculus2019}
Nakamura Y.
\newblock The Undecidability of {FO3} and the Calculus of Relations with Just
  One Binary Relation.
\newblock In: ICLA, volume 11600 of \emph{LNTCS}. Springer, 2019 pp. 108--120.
\newblock \doi{10.1007/978-3-662-58771-3_11}.

\bibitem{nakamuraExpressivePowerSuccinctness2020}
Nakamura Y.
\newblock Expressive Power and Succinctness of the Positive Calculus of
  Relations.
\newblock In: RAMiCS, volume 12062 of \emph{LNTCS}. Springer, 2020 pp.
  204--220.
\newblock \doi{10.1007/978-3-030-43520-2_13}.

\bibitem{nakamuraExpressivePowerSuccinctness2022}
Nakamura Y.
\newblock Expressive power and succinctness of the positive calculus of binary
  relations.
\newblock \emph{Journal of Logical and Algebraic Methods in Programming}, 2022.
\newblock \textbf{127}:100760.
\newblock \doi{10.1016/j.jlamp.2022.100760}.

\bibitem{nakamuraFiniteVariableOccurrenceFragment2023}
Nakamura Y.
\newblock On the Finite Variable-Occurrence Fragment of the Calculus of
  Relations with Bounded Dot-Dagger Alternation.
\newblock In: MFCS, volume 272 of \emph{LIPIcs}. Schloss Dagstuhl, 2023 p.
  69:1–69:15.
\newblock \doi{10.4230/LIPIcs.MFCS.2023.69}.

\bibitem{nakamuraFiniteVariableOccurrenceFragment2023arxiv}
Nakamura Y.
\newblock On the Finite Variable-Occurrence Fragment of the Calculus of
  Relations with Bounded Dot-Dagger Alternation, 2023.
\newblock \doi{10.48550/arXiv.2307.05046}.

\bibitem{plaistedStructurepreservingClauseForm1986}
Plaisted DA, Greenbaum S.
\newblock A Structure-preserving Clause Form Translation.
\newblock \emph{Journal of Symbolic Computation}, 1986.
\newblock \textbf{2}(3):293--304.
\newblock \doi{10.1016/S0747-7171(86)80028-1}.

\bibitem{scottDecisionMethodValidity1962}
Scott D.
\newblock A decision method for validity of sentences in two variables.
\newblock \emph{Journal of Symbolic Logic}, 1962.
\newblock \textbf{27}(337).

\bibitem{tarskiCalculusRelations1941}
Tarski A.
\newblock On the Calculus of Relations.
\newblock \emph{The Journal of Symbolic Logic}, 1941.
\newblock \textbf{6}(3):73--89.
\newblock \doi{10.2307/2268577}.

\bibitem{tarskiFormalizationSetTheory1987}
Tarski A, Givant S.
\newblock A Formalization of Set Theory without Variables, volume~41.
\newblock American Mathematical Society, 1987.
\newblock ISBN 978-0-8218-1041-5.
\newblock \doi{10.1090/coll/041}.

\bibitem{tseitinComplexityDerivationPropositional1983}
Tseitin GS.
\newblock On the Complexity of Derivation in Propositional Calculus.
\newblock In: Automation of Reasoning: 2: Classical Papers on Computational
  Logic 1967–1970, Symbolic Computation, pp. 466--483. Springer.
\newblock ISBN 978-3-642-81955-1, 1983.
\newblock \doi{10.1007/978-3-642-81955-1_28}.

\end{thebibliography}

\appendix
\section{A construction from FO to FO3 not via CoR}\label{section: direct proof}
In the following, we give a direct translation from $\mathrm{FO}_{=}$ \kl{formulas} into $\mathrm{FO3}_{=}$ \kl{formulas} not via CoR (this is almost immediately obtained from Sect.\ \ref{section: reduction} with the \kl{standard translation} from CoR \kl{terms} into $\mathrm{FO3}_{=}$ \kl{formulas} (Prop.\ \ref{proposition: standard translation})).

Let $\Gamma_{\mathrm{FO3}_{=}}^{(k)}$ be the following finite set of $\mathrm{FO3}_{=}$ \kl{formulas} where $i$ ranges over $1 \le i \le k$, $E_{[1, 0]}(x, y)$ and $E_{[k+1, k]}(x, y)$ are the notations for denoting the ``true'' \kl{formula} $\const{t}$ (\Cref{section: first-order logic}), and $x, y, z$ are pairwise distinct \kl{variables}:
\begin{align}
   \label{equation: U'}  & \forall x, \forall y, (U(x, y) \leftrightarrow \bigwedge_{1 \le j \le k} \pi_{j}(x, y))                                                                                                                             \tag{\ref{equation: U}'}                                                              \\
   \label{equation: EL'}  & \forall x, \forall y, (E_{[1, i]}(x,y) \leftrightarrow (E_{[1, i - 1]}(x, y) \land (\exists z, (\pi_{i}(x, z) \land \pi_{i}(y, z)))))                                                                                                                                       \tag{\ref{equation: EL}'}  \\
   \label{equation: ER'}   & \forall x, \forall y, (E_{[i, k]}(x,y) \leftrightarrow (E_{[i + 1, k]}(x, y) \land (\exists z, (\pi_{i}(x, z) \land \pi_{i}(y, z)))))                 \tag{\ref{equation: ER}'}                                                                                                                        \\
   \label{equation: Q'}  & \forall x, \forall y, (Q_i(x, y) \leftrightarrow (E_{[1, i-1]}(x, y) \land E_{[i+1, k]}(x, y)))                                                                                                                                           \tag{\ref{equation: Q}'}                                      \\
   \label{equation: U is a subset of I'}   & \forall x, \forall y, (U(x, y) \to x = y)                                                                        \tag{\ref{equation: U is a subset of I}'}                               \\
   \label{equation: pi is functional'}   & \forall x, \forall y, \forall z, ((\pi_i(x, y) \land \pi_{i}(x, z)) \to y = z) \tag{\ref{equation: pi is functional}'}                                                                                                                                                                    \\
   \label{equation: pi is function'}   & \forall x, \exists y, (\pi_i(x, y) \land U(y, y)) \tag{\ref{equation: pi is function}'}                                                                                                                                        \\
   \label{equation: existence'}   & \forall x, \forall y, (U(y, y) \to (\exists z, Q_i(x, z) \land \pi_i(z, y)))                                                     \tag{\ref{equation: existence}'} \\
   \label{equation: same'}   & \forall x, \forall y, (x = y) \leftrightarrow E_{[1, k]}(x, y) \tag{\ref{equation: same}'}                                                                                                                                      \\
   \label{equation: non-empty'}  & \exists x, U(x, x) \tag{\ref{equation: non-empty}'}                                                                                                                                                                                                                                    \\
   \label{equation: a U'}   & \forall x, \forall y, a(x, y) \rightarrow (U(x, x) \land U(y, y)) \tag{\ref{equation: a U}'}
\end{align}

\begin{lemma}\label{lemma: characterize FO3=}
    Let $\model$ be a \kl{structure} over $\sig^{(k)}$.
    Then
    \[\model \models \bigwedge \Gamma_{\mathrm{FO3}_{=}}^{(k)} \quad\iff\quad \model \in \mathrm{I}(k\mbox{-TUPLE}).\]
\end{lemma}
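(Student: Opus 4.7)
The plan is to reduce to Lemma \ref{lemma: characterize} by observing that $\Gamma_{\mathrm{FO3}_{=}}^{(k)}$ encodes, essentially verbatim, the \kl{standard translations} of the CoR axioms in $\Gamma^{(k)}$. Once this correspondence is established, the characterization of $\mathrm{I}(k\mbox{-TUPLE})$ transfers for free from the CoR side.

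First I would match the \kl(fo){formulas} (\ref{equation: U'})--(\ref{equation: a U'}) against the \kl(cor){equations} (\ref{equation: U})--(\ref{equation: a U}) one at a time. For any \kl(cor){equation} $\term[2] = \term[3]$ (resp.\ \kl(cor){inequation} $\term[2] \le \term[3]$), the \kl{standard translation} (Prop.\ \ref{proposition: standard translation}) produces $\mathrm{FO3}_{=}$ \kl(fo){formulas} $\fml_{\term[2]}(x,y), \fml_{\term[3]}(x,y)$ with two free \kl{variables} such that $\model \models \term[2] = \term[3]$ (resp.\ $\model \models \term[2] \le \term[3]$) iff $\model \models \forall x, \forall y, (\fml_{\term[2]}(x,y) \leftrightarrow \fml_{\term[3]}(x,y))$ (resp.\ $\model \models \forall x, \forall y, (\fml_{\term[2]}(x,y) \to \fml_{\term[3]}(x,y))$). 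Writing these translations out line by line, and unfolding $\pi_i^{\smile}(z, y)$ as $\pi_i(y, z)$, reproduces (\ref{equation: U'})--(\ref{equation: a U'}) with at most two small simplifications.

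The two slightly subtle cases are (\ref{equation: pi is function'}) and (\ref{equation: existence'}), whose direct \kl{standard translations} would contain a subformula of shape $\exists z, U(y, z) \land \dots$ rather than the simpler $U(y, y)$ that appears in $\Gamma_{\mathrm{FO3}_{=}}^{(k)}$. These deviations are harmless because (\ref{equation: U is a subset of I'}) (expressing $U \le \const{I}$) also belongs to $\Gamma_{\mathrm{FO3}_{=}}^{(k)}$, and under this axiom any pair $\tuple{y, z} \in U^{\model}$ forces $y = z$; hence the simpler and the direct formulations coincide on every $\model$ satisfying (\ref{equation: U is a subset of I'}). Collecting the line-by-line correspondences, $\bigwedge \Gamma_{\mathrm{FO3}_{=}}^{(k)}$ is \kl(fo){true} on $\model$ iff $\model \models \bigwedge \Gamma^{(k)}$.

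Combining this with Lemma \ref{lemma: characterize} yields
\[\model \models \bigwedge \Gamma_{\mathrm{FO3}_{=}}^{(k)} \iff \model \models \bigwedge \Gamma^{(k)} \iff \model \in \mathrm{I}(k\mbox{-TUPLE}),\]
as required. The main (and only) obstacle is the mild bookkeeping of the second paragraph; there is no new conceptual argument, and indeed one could alternatively replay the proof of Lemma \ref{lemma: characterize} directly inside $\mathrm{FO3}_{=}$, but the reduction through the \kl{standard translation} avoids duplicating that work.
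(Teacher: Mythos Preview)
Your proposal is correct and follows exactly the paper's approach: the paper's own proof is the one-liner ``We can check that $(\bigwedge \Gamma_{\mathrm{FO3}_{=}}^{(k)})$ and $(\bigwedge \Gamma^{(k)})$ are \kl{semantically equivalent}. Thus, by Lem.\ \ref{lemma: characterize}, this completes the proof.'' You have simply unpacked the ``we can check'' via the \kl{standard translation}, and your observation that (\ref{equation: pi is function'}) and (\ref{equation: existence'}) only match their CoR counterparts modulo (\ref{equation: U is a subset of I'}) is a useful clarification that the paper leaves implicit.
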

\begin{proof}
    We can check that $(\bigwedge \Gamma_{\mathrm{FO3}_{=}}^{(k)})$ and $(\bigwedge \Gamma^{(k)})$ are \kl{semantically equivalent}.
    Thus, by Lem.\  \ref{lemma: characterize}, this completes the proof.
\end{proof}

\begin{definition}
    Let $k \ge 3$ and $X = \set{x_1, \dots, x_k}$ where $x_1, \dots, x_k$ are pairwise distinct \kl(fo){variables}.
    For each \kl{formula} $\fml$ of $\mathrm{V}(\fml) \subseteq X$ and $z \in \set{x_1, x_2, x_3}$,
    the FO3 \kl{formula} $\mathrm{T}_{z}^{(k)}(\fml)$ of $\mathrm{V} \subseteq \set{x_1, x_2, x_3}$ is inductively defined as follows, where $z' = \min(\set{x_1, x_2, x_3} \setminus \set{z})$ and $z'' = \min(\set{x_1, x_2, x_3} \setminus \set{z, z'})$ under the ordering $x_1 < x_2 < x_3$:
    \begin{align*}
        \mathrm{T}_{z}^{(k)}(a(x_i, x_j))           & \quad\defeq\quad \exists z', \exists z'', \pi_{i}(z, z') \land a(z', z'') \land \pi_{j}(z, z'') \\
        \mathrm{T}_{z}^{(k)}(\lnot \fml[2])         & \quad\defeq\quad \lnot \mathrm{T}_{z}^{(k)}(\fml[2])                                            \\
        \mathrm{T}_{z}^{(k)}(\fml[2] \land \fml[3]) & \quad\defeq\quad \mathrm{T}_{z}^{(k)}(\fml[2]) \land \mathrm{T}_{z}^{(k)}(\fml[3])              \\
        \mathrm{T}_{z}^{(k)}(\exists x_i, \fml[2])  & \quad\defeq\quad \exists z', Q_i(z, z') \land \mathrm{T}_{z'}^{(k)}(\fml[2]).
    \end{align*}
\end{definition}

\begin{lemma}\label{lemma: equiv FO3=}
    Let $k \ge 3$ and $X = \set{x_1, \dots, x_k}$ where $x_1, \dots, x_k$ are pairwise distinct \kl(fo){variables}.
    Let $\model$ be a \kl{structure}.
    For all \kl{formulas} $\fml$ of $\mathrm{V}(\fml) \subseteq X$,
    all $z \in \set{x_1, x_2, x_3}$,
    and all $u_1, \dots, u_k \in \domain{\model}$, we have:
    \[\set{z \mapsto \tuple{u_1, \dots, u_k}} \in \jump{\mathrm{T}_{z}^{(k)}(\fml)}^{\model^{(k)}} \restriction \set{z} \quad\iff\quad \set{x_1 \mapsto u_1, \dots, x_k \mapsto u_k} \in \jump{\fml}^{\model} \restriction X.\]
\end{lemma}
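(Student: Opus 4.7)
The plan is to prove the lemma by structural induction on $\fml$, mirroring the structure of the proof of Lemma \ref{lemma: equiv} but tracking the FO3 variable rotation explicitly. Observe that the statement is parameterized uniformly over every $z \in \set{x_1, x_2, x_3}$; this uniformity is essential because the recursion for the existential quantifier case changes the base variable.

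For the base case $\fml = a(x_i, x_j)$, I would unfold $\mathrm{T}_{z}^{(k)}(\fml) = \exists z', \exists z'', \pi_{i}(z, z') \land a(z', z'') \land \pi_{j}(z, z'')$ on $\model^{(k)}$. Writing $z \mapsto \tuple{u_1, \dots, u_k}$, the definition of $\pi_{i}^{\model^{(k)}}$ in \Cref{definition: k-extended structure} forces the witness $z'$ to be $\tuple{u_i, \dots, u_i}$, and likewise $z''$ to be $\tuple{u_j, \dots, u_j}$. The remaining atom $a(z', z'')$ then holds on $\model^{(k)}$ iff $\tuple{u_i, u_j} \in a^{\model}$, which is exactly $\set{x_1 \mapsto u_1, \dots, x_k \mapsto u_k} \in \jump{a(x_i, x_j)}^{\model} \restriction X$.

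The cases $\fml = \lnot \fml[2]$ and $\fml = \fml[2] \land \fml[3]$ are immediate from the induction hypothesis (applied with the same base variable $z$), since the semantics of negation and conjunction commute with the restriction to $\set{z}$. For the existential case $\fml = \exists x_i, \fml[2]$, the translation is $\exists z', Q_i(z, z') \land \mathrm{T}_{z'}^{(k)}(\fml[2])$. Given $z \mapsto \tuple{u_1, \dots, u_k}$, the definition of $Q_{i}^{\model^{(k)}}$ yields that $\tuple{\tuple{u_1, \dots, u_k}, \tuple{u_1', \dots, u_k'}} \in Q_i^{\model^{(k)}}$ iff $u_j' = u_j$ for all $j \neq i$, so the existentially quantified $z'$ ranges exactly over tuples that differ from $\tuple{u_1, \dots, u_k}$ only in the $i$-th coordinate. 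Applying the induction hypothesis to $\fml[2]$ with base variable $z'$ (valid because $z' \in \set{x_1, x_2, x_3}$) converts $\mathrm{T}_{z'}^{(k)}(\fml[2])$ holding at $z' \mapsto \tuple{u_1, \dots, u_{i-1}, u_i', u_{i+1}, \dots, u_k}$ to $\set{x_1 \mapsto u_1, \dots, x_i \mapsto u_i', \dots, x_k \mapsto u_k} \in \jump{\fml[2]}^{\model} \restriction X$, and the outer existential over $u_i'$ matches $\jump{\exists x_i, \fml[2]}^{\model} \restriction X$.

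The main obstacle will be bookkeeping the variable rotation $z \mapsto z'$ in the existential case, since we have only three FO variables. The parameterization of the lemma over $z \in \set{x_1, x_2, x_3}$ is exactly what allows the inductive call on $\fml[2]$ at the new base variable $z'$ to go through, and the hypothesis $k \ge 3$ is needed so that $z, z', z''$ are always pairwise distinct and the translation uses a genuine three-variable fragment. Beyond this, each case is a routine unfolding of the definitions in \Cref{definition: k-extended structure} followed by an application of the induction hypothesis.
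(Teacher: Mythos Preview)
Your proposal is correct and follows exactly the approach the paper indicates: the paper's proof reads in full ``By induction on the structure of $\fml$ (similarly for Lem.~\ref{lemma: equiv})'', and your case analysis spells out precisely that induction, including the key observation that the uniform quantification over $z \in \set{x_1,x_2,x_3}$ is what makes the inductive step for $\exists x_i$ go through at the rotated base variable $z'$. One minor remark: your explanation of why $k \ge 3$ is needed is slightly off---$z,z',z''$ are pairwise distinct by construction regardless of $k$; the assumption $k \ge 3$ is rather what guarantees $\set{x_1,x_2,x_3} \subseteq X$ so that the translation is well-defined---but this does not affect the correctness of the argument.
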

\begin{shortproof}
    By induction on the structure of $\fml$ (similarly for Lem.\ \ref{lemma: equiv}).
\end{shortproof}
Combining the two above, we have obtained the following main lemma.
\begin{lemma}\label{lemma: reduction FO3=}
    Let $k \ge 3$ and $X = \set{x_1, \dots, x_k}$ where $x_1, \dots, x_k$ are pairwise distinct \kl(fo){variables}.
    Let $\fml$ be a \kl{formula} of $\mathrm{V}(\fml) \subseteq X$ and $z \in \set{x_1, x_2, x_3}$.
    Then
    \[ \mbox{$(\bigwedge \Gamma_{\mathrm{FO3}_{=}}^{(k)}) \to \mathrm{T}_{z}^{(k)}(\fml)$ is [finitely] \kl{valid}} \quad\iff\quad \mbox{$\fml$ is [finitely] \kl{valid}}.\]
\end{lemma}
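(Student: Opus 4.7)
The plan is to mirror exactly the chain-of-equivalences argument used for Lemma \ref{lemma: reduction}, using the two preceding lemmas (\ref{lemma: characterize FO3=}) and (\ref{lemma: equiv FO3=}) as black boxes, together with Proposition \ref{proposition: preserving finite} to handle the finite case uniformly with the general case.

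Concretely, I would start from the left-hand side and unfold validity: $(\bigwedge \Gamma_{\mathrm{FO3}_{=}}^{(k)}) \to \mathrm{T}_{z}^{(k)}(\fml)$ is valid iff every structure $\model$ satisfying $\bigwedge \Gamma_{\mathrm{FO3}_{=}}^{(k)}$ satisfies $\mathrm{T}_{z}^{(k)}(\fml)$. By Lemma \ref{lemma: characterize FO3=}, such structures are precisely those in $\mathrm{I}(k\text{-TUPLE})$, and since $\mathrm{T}_{z}^{(k)}(\fml)$ is preserved under isomorphism (it is a first-order formula), we may restrict attention to structures of the form $\model[2]^{(k)}$ for some $\model[2]$ over $\sig$. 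This reduces the question to: for every $\model[2]$ and every assignment to $z$ in $\domain{\model[2]^{(k)}} = \domain{\model[2]}^{k}$, the assignment lies in $\jump{\mathrm{T}_{z}^{(k)}(\fml)}^{\model[2]^{(k)}}$. Since each element of $\domain{\model[2]^{(k)}}$ is a tuple $\tuple{u_1, \dots, u_k}$, we can rename the universal quantifier as $\forall u_1, \dots, \forall u_k \in \domain{\model[2]}$.

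Next, I apply Lemma \ref{lemma: equiv FO3=} to this reformulation: $\set{z \mapsto \tuple{u_1, \dots, u_k}}$ lies in $\jump{\mathrm{T}_{z}^{(k)}(\fml)}^{\model[2]^{(k)}} \restriction \set{z}$ iff $\set{x_1 \mapsto u_1, \dots, x_k \mapsto u_k}$ lies in $\jump{\fml}^{\model[2]} \restriction X$. Quantifying universally over $\model[2]$ and the $u_i$'s then gives exactly the statement that $\fml$ is valid (since $\mathrm{V}(\fml) \subseteq X$, validity only depends on assignments restricted to $X$).

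For the finite case, the same chain goes through verbatim, using the additional observation from Proposition \ref{proposition: preserving finite} that $\model[2]$ is \kl(structure){finite} iff $\model[2]^{(k)}$ is \kl(structure){finite}; this guarantees that restricting the outer quantification to finite structures on one side corresponds precisely to restricting to finite structures on the other. The only mildly subtle step is the passage from ``all $\model$ with $\model \models \bigwedge \Gamma_{\mathrm{FO3}_{=}}^{(k)}$'' to ``all $\model[2]^{(k)}$'', where one must note that isomorphism preserves both satisfaction of $\mathrm{T}_{z}^{(k)}(\fml)$ and \kl(structure){finiteness}; but this is routine. No step should present a genuine obstacle, since the two supporting lemmas carry all the content.
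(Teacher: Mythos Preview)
Your proposal is correct and matches the paper's proof essentially step for step: the paper also unfolds validity, applies Lemma~\ref{lemma: characterize FO3=} to restrict to structures of the form $\model[2]^{(k)}$, applies Lemma~\ref{lemma: equiv FO3=} to pass to $\jump{\fml}^{\model[2]}$, and handles the finite case via Proposition~\ref{proposition: preserving finite}. Your added remark about isomorphism invariance of first-order formulas makes explicit a routine step the paper leaves implicit.
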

\begin{shortproof}
    We have:
    \begin{align*}
         & \mbox{$(\bigwedge \Gamma_{\mathrm{FO3}_{=}}^{(k)}) \to \mathrm{T}_{z}^{(k)}(\fml)$ is \kl{valid}}                                                                                                                                                                                                                                                                             \\
         & \iff\quad \mbox{$\set{z \mapsto v} \in \jump{\mathrm{T}_{z}^{(k)}(\fml)}^{\model} \restriction \set{z}$ for all $\model$ s.t.\ $\model \models \bigwedge \Gamma_{\mathrm{FO3}_{=}}^{(k)}$ and all $v \in \domain{\model}$}                                                                                                                                                    \\
         & \iff\quad \mbox{$\set{z \mapsto \tuple{u_1, \dots, u_k}} \in \jump{\mathrm{T}_{z}^{(k)}(\fml)}^{\model[2]^{(k)}} \restriction \set{z}$ for all $\model[2]$ and $u_1, \dots, u_k \in \domain{\model[2]}$}                                                                                                                           \tag{Lem.\ \ref{lemma: characterize FO3=}} \\
         & \iff\quad \mbox{$\set{x_1 \mapsto u_1, \dots, x_k \mapsto u_k} \in \jump{\fml}^{\model[2]} \restriction X$ for all $\model[2]$ and $u_1, \dots, u_k \in \domain{\model[2]}$} \tag{Lem.\ \ref{lemma: equiv FO3=}}                                                                                                                                                              \\
         & \iff\quad \mbox{$\fml$ is \kl{valid}.}
    \end{align*}
    For \kl{finite validity},
    it is shown in the same way, because $\model[2]$ is \kl(structure){finite} iff $\model[2]^{(k)}$ is \kl(structure){finite}.
\end{shortproof}

Additionally, note that equality can be eliminated in $\mathrm{FO3}_{=}$.
\begin{proposition}\label{proposition: equality elimination in FO3}
    There is a linear-size \kl{conservative reduction} from $\mathrm{FO3}_{=}$ \kl{formulas} into $\mathrm{FO3}$ \kl{formulas} without equality.
\end{proposition}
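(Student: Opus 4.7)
The plan is to use the classical trick of simulating equality by a fresh binary predicate symbol $E$ axiomatized as a congruence relation. Given an input $\mathrm{FO3}_{=}$ \kl{formula} $\fml$, first I would let $\fml^{\ast}$ be the $\mathrm{FO3}$ \kl{formula} (without equality) obtained by replacing every atom of the form $x = y$ occurring in $\fml$ with the atom $E(x, y)$; by the paper's size convention (where $\|x = y\| = 3 = \|E(x, y)\|$), this substitution preserves \kl(fo){size} on the nose. Then I would set $\fml^{\dagger} \defeq (\bigwedge \Gamma_{E}) \to \fml^{\ast}$, where $\Gamma_{E}$ is the finite set containing: reflexivity $\forall x, E(x, x)$; symmetry $\forall x, \forall y, E(x, y) \to E(y, x)$; transitivity $\forall x, \forall y, \forall z, (E(x, y) \land E(y, z)) \to E(x, z)$; and, for each binary predicate symbol $a$ occurring in $\fml^{\ast}$ (including $E$ itself), the two congruence axioms $\forall x, \forall y, \forall z, (E(x, y) \land a(x, z)) \to a(y, z)$ and $\forall x, \forall y, \forall z, (E(x, y) \land a(z, x)) \to a(z, y)$. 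Each axiom uses at most three \kl{variables}, hence $\fml^{\dagger}$ is in $\mathrm{FO3}$ without equality, and $\|\Gamma_{E}\|$ is linear in the number of distinct binary predicates of $\fml$, so $\|\fml^{\dagger}\| = O(\|\fml\|)$.

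Next, I would establish that $\fml$ is [finitely] \kl{valid} iff $\fml^{\dagger}$ is [finitely] \kl{valid}. For the $(\Longleftarrow)$ direction, given any [finite] \kl{structure} $\model$ refuting $\fml$, I would expand it to a \kl{structure} $\model'$ on the extended signature by interpreting $E^{\model'}$ as the identity relation on $\domain{\model}$. Then $\model' \models \bigwedge \Gamma_{E}$ trivially, and since $E$ is interpreted as identity, $\fml^{\ast}$ and $\fml$ agree on $\model'$, yielding a [finite] counter-model of $\fml^{\dagger}$.

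For the $(\Longrightarrow)$ direction, given a [finite] $\model'$ witnessing $\model' \models \bigwedge \Gamma_{E}$ but refuting $\fml^{\ast}$, the congruence axioms ensure that $E^{\model'}$ is a congruence with respect to every binary predicate of $\model'$. I would then form the quotient $\model'' \defeq \model'/E^{\model'}$, which is again a [finite] \kl{structure}, and prove by a routine induction on $\fml$ that for every assignment $\sigma$ of free \kl{variables} in $\model'$, sending $\sigma$ to its pointwise $E^{\model'}$-equivalence class assignment $\bar{\sigma}$ yields $(\model', \sigma) \models \fml^{\ast}$ iff $(\model'', \bar{\sigma}) \models \fml$. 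This produces a [finite] counter-model of $\fml$ on $\model''$.

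The main obstacle is minor and purely bookkeeping: one must remember to include $E$ itself among the predicates covered by the congruence axioms in $\Gamma_{E}$ (so that the quotient is well-defined on the full expanded signature), and one must check that the inductive equivalence in the $(\Longrightarrow)$ direction respects the three-\kl{variable} restriction. Both hold because quantification in $\fml$ rebinds only one \kl{variable} at a time and the quotient map acts pointwise, so no auxiliary \kl{variables} ever need to be introduced.
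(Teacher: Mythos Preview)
Your proposal is correct and follows essentially the same route as the paper: replace $=$ by a fresh binary predicate $E$, axiomatize $E$ as an equivalence relation, and add per-predicate congruence axioms split into two three-variable implications so that everything stays in $\mathrm{FO3}$; the quotient construction for one direction and the identity-interpretation for the other are exactly the standard argument the paper cites from \cite{boolosComputabilityLogic2007}. The only cosmetic difference is that the paper writes the congruence axioms as biconditionals $E(x,x') \to (a(x,y) \leftrightarrow a(x',y))$ whereas you use one-sided implications, which is equivalent once symmetry of $E$ is available.
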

\begin{proofsketch}
    See, e.g., \cite[Prop.\ 19.13]{boolosComputabilityLogic2007} from $\mathrm{FO}_{=}$ to $\mathrm{FO}$.
    This can be proved by replacing each occurrence of equality $=$ with a fresh binary predicate symbol $E$
    and then adding axioms of that $E$ is an equivalence relation
    and of that each binary predicate $a$ satisfies the congruence law w.r.t.\ $E$:
    \[\forall x, \forall x', \forall y, \forall y', (E(x, x') \land E(y, y')) \rightarrow (a(x, y) \leftrightarrow a(x', y')) \]
    While the \kl{formula} above is not in FO3, the construction in \cite[Prop.\ 19.13]{boolosComputabilityLogic2007} still works for FO3 by replacing the \kl{formula} with the conjunction of the following two formulas:
    \begin{align*}
        \forall x, \forall x', \forall y, E(x, x') \rightarrow (a(x, y) \leftrightarrow a(x',y)) \\
        \forall x, \forall y, \forall y', E(y, y') \rightarrow (a(x, y) \leftrightarrow a(x,y'))
    \end{align*}
\end{proofsketch}

\begin{theorem}\label{theorem: reduction FO3=}
    There is a linear-size \kl{conservative reduction} from $\mathrm{FO}_{=}$ \kl{formulas} into $\mathrm{FO3}$ \kl{formulas} (without equality).
\end{theorem}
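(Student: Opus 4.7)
The plan is to chain two linear-size \kl{conservative reductions} already in hand: Lemma \ref{lemma: reduction FO3=}, which reduces $\mathrm{FO}_{=}$ into $\mathrm{FO3}_{=}$, and Proposition \ref{proposition: equality elimination in FO3}, which eliminates equality linearly within $\mathrm{FO3}$. Given an input $\mathrm{FO}_{=}$ \kl{formula} $\fml$, first set $k \defeq \max(\#\mathrm{V}(\fml), 3)$, enumerate the relevant \kl{variables} as $x_1, \ldots, x_k$ (padded with fresh dummies if $\#\mathrm{V}(\fml) < 3$ so that Lemma \ref{lemma: reduction FO3=} applies), and fix $z \defeq x_1$. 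The output of the first stage is then $(\bigwedge \Gamma_{\mathrm{FO3}_{=}}^{(k)}) \to \mathrm{T}_z^{(k)}(\fml)$, which is an $\mathrm{FO3}_{=}$ \kl{formula} and, by Lemma \ref{lemma: reduction FO3=}, is \kl{valid} (resp.\ \kl{finitely valid}) iff $\fml$ is. Composing with Proposition \ref{proposition: equality elimination in FO3} then yields an $\mathrm{FO3}$ \kl{formula} without equality, still preserving (finite) \kl{validity}, and hence a \kl{conservative reduction}.

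The key verification is that the composite translation is linear-size. A straightforward structural induction on $\fml$ shows $\|\mathrm{T}_z^{(k)}(\fml)\| = O(\|\fml\|)$, since each clause of the definition of $\mathrm{T}_z^{(k)}$ wraps its recursive calls in only a constant amount of syntax. For the axiomatization $\bigwedge \Gamma_{\mathrm{FO3}_{=}}^{(k)}$ the bound splits into pieces: schema (\ref{equation: U'}) contributes $O(k)$; each of (\ref{equation: EL'}), (\ref{equation: ER'}), (\ref{equation: Q'}), (\ref{equation: U is a subset of I'}), (\ref{equation: pi is functional'}), (\ref{equation: pi is function'}), (\ref{equation: existence'}) is $O(1)$ per index $i$ and so $O(k)$ in total; (\ref{equation: same'}) and (\ref{equation: non-empty'}) are $O(1)$; and (\ref{equation: a U'}) contributes $O(1)$ for each predicate symbol occurring in $\fml$, i.e.\ $O(\|\fml\|)$ in total. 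Since $k = O(\|\fml\|)$, the first stage is linear-size, and composing with the linear-size equality-elimination of Proposition \ref{proposition: equality elimination in FO3} keeps it linear.

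The main obstacle here is bookkeeping rather than any new semantic argument: the whole point of introducing the auxiliary symbols $E_{[1, i]}$ and $E_{[i, k]}$ into $\Gamma_{\mathrm{FO3}_{=}}^{(k)}$ is to express the prefix-equality conjunctions incrementally so that both $E_{[1, k]}$ (used in (\ref{equation: same'})) and each $Q_i$ are defined in $O(1)$ size per index. A naive rendering of $Q_i$ as $\bigwedge_{j \neq i} (\exists z,\, \pi_j(x, z) \land \pi_j(y, z))$ would blow the axiomatization up to $\Theta(k^2)$ and break linearity. Since this cumulative-sum device is already built into Section \ref{section: reduction} and transcribed faithfully into the $\mathrm{FO3}_{=}$ axioms, the linear bound goes through immediately, and the theorem follows.
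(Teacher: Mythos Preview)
Your proposal is correct and follows exactly the paper's own route: the paper's proof is the one-liner ``By Lem.\ \ref{lemma: reduction FO3=} with Prop.\ \ref{proposition: equality elimination in FO3}'', and you have simply unpacked the implicit size bookkeeping that makes the composite linear. The only minor slip is listing (\ref{equation: U is a subset of I'}) among the schemas indexed by $i$ (it is a single $O(1)$ formula, not one per $i$), but this does not affect the bound.
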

\begin{shortproof}
    By Lem.\ \ref{lemma: reduction FO3=} with Prop.\ \ref{proposition: equality elimination in FO3}.
\end{shortproof}

\end{document}